\documentclass{article}

    \PassOptionsToPackage{numbers, sort&compress}{natbib}

\usepackage[preprint]{neurips_2026}

\usepackage[utf8]{inputenc} 
\usepackage[T1]{fontenc}    
\usepackage{hyperref}       
\usepackage{url}            
\usepackage{booktabs}       
\usepackage{amsfonts}       
\usepackage{nicefrac}       
\usepackage{microtype}      
\usepackage{xcolor}         

\usepackage{subcaption}

\title{Proportionality in Ranking Compression}

\author{%
  Daniel Halpern \\
  Google Research 
  \And
  Xinyu Liu \\
  U. of Southern California
  \And
  Evi Micha \\
  U. of Southern California
  \AND
  Yu Peng Ng\\
  National University of Singapore
  \And
  Warut Suksompong\\
  National University of Singapore
}

\usepackage{amsmath}
\usepackage{amsthm}
\usepackage{cleveref}
\usepackage{tikz}
\usepackage{float}
\usepackage{caption}

\newtheorem{theorem}{Theorem}

\newtheorem{proposition}{Proposition}
\newtheorem{lemma}{Lemma}

\theoremstyle{definition}
\newtheorem{definition}{Definition}

\usepackage{enumitem}
\usepackage{algorithm}
\usepackage[noend]{algpseudocode}
\usepackage{amssymb}

\newcommand{\sto}{\xrightarrow{\text{s}}}
\newcommand{\wto}{\xrightarrow{\text{w}}}
\newcommand{\nto}[1]{\xrightarrow{#1}}

\newcommand{\Top}[2]{\mathrm{Top}_{#1}(#2)}

\newcommand{\ceil}[1]{\left \lceil #1 \right \rceil}
\newcommand{\floor}[1]{\left \lfloor #1 \right \rfloor}

\allowdisplaybreaks

\usepackage{color-edits}
\usepackage[dvipsnames]{xcolor}

\begin{document}

\maketitle

\begin{abstract}
Motivated by applications in reinforcement learning from human feedback, we consider a setting where a given collection of rankings needs to be compressed into a smaller collection of rankings.
This can also be seen as a combination of two fundamental social choice frameworks: ranking aggregation and multiwinner voting.
We propose three proportionality notions that capture different types of representation in this setting: positional proportionality, pairwise proportionality, and proportionality for solid coalitions (PSC).
On the one hand, we show that positional proportionality and PSC are always satisfiable for any input rankings and target compression size, and a desired output can be found in polynomial time. 
On the other hand, we prove that pairwise proportionality cannot be satisfied in general, but can nevertheless be attained when the input rankings are single-peaked or single-crossing.
\end{abstract}

\section{Introduction}
\label{sec:intro}

The goal of AI alignment is to build systems that act consistently with human values. For large language models, the predominant method is \emph{reinforcement learning from human feedback} (RLHF)~\citep{ouyang2022training}. 
This process typically consists of three stages: collect user preferences over possible outputs, fit a reward function to best match these choices, and fine-tune the policy to optimize the learned signal.
However, standard RLHF implicitly assumes a shared objective among evaluators. 
When ``correct'' behavior naturally diverges across cultures or individuals with different backgrounds~\citep{prabhakaran2022cultural,yuan2024cultural,anwar2024foundational,kirk2024prism,zhang2025cultivating}, aggregating these varied viewpoints into a single target marginalizes minority perspectives and leads to \emph{preference collapse}~\citep{xiao2024algorithmic}, an extreme lack of diversity in outputs~\citep{khalifa2020distributional,perez2022red,santurkar2023whose,kirk2024understanding,shypula2025evaluating,lake2025from}. 

In response to this challenge, the subfield of \emph{pluralistic alignment} has emerged, with an explicit goal of preserving and utilizing diverse belief systems~\citep{roadmap}. 
One practical approach is to learn multiple reward functions from feedback data that collectively cover these diverse views~\citep{coste2024reward,chakraborty2024maxmin,poddar2024personalizing,chen2025pal,HalpernMiPr25}.
For example, we could train a distinct model for each reward and deploy them as an ensemble~\citep{roadmap}.  Alternatively, these rewards or their resulting learned models can act as a ``jury'' that effectively votes on the final output~\citep{gordon2022jury}.
In an idealized world, we might learn a unique reward function for every individual in society. 
While this would cover the full spectrum of views, scaling it to millions (or billions) of personalized rewards is computationally prohibitive.  
Therefore, to implement any of these ensemble methods, it is necessary to compress the global preferences into a relatively small set of reward functions while preserving their underlying diversity.

To formalize this approach, we turn to the field of \emph{social choice theory}, which studies how to aggregate individual preferences into a collective outcome~\citep{ArrowSeSu02,BrandtCoEn16}. 
In our setting, human annotators naturally provide ranked feedback, and a reward function can be evaluated by how it orders candidate outputs~\citep{ge2024axioms}. Thus, at an abstract level, our problem reduces to the following: given an input profile of $n$ rankings, construct $k \le  n$ output rankings that optimally represent the input.\footnote{Admittedly, there are some practical gaps between this theoretical abstraction and traditional RLHF frameworks. For instance, annotators sometimes provide only sparse pairwise comparisons, making it difficult to infer complete preferences (though recent datasets are beginning to collect substantially richer preference data per user~\citep{zhang2025cultivating}). Nevertheless, our primary focus is to establish the theoretical limits of ranking compression; we discuss possible approaches to bridge this divide in \Cref{sec:future-directions}.}
This formalization casts our problem as an intersection of two classic social choice frameworks: \emph{multiwinner voting}, which aims to choose a representative set of candidates, and \emph{rank aggregation}, which concerns selecting a single winning ranking (the setting of Arrow's famous impossibility theorem~\citep{Arrow50}). 
By contrast, our setting---which we will refer to as \emph{ranking compression}---requires generating a \emph{representative collection of rankings}.
Beyond RLHF, the same problem appears whenever a large collection of preferences must be represented by a small number of viewpoints. For instance, members of a population may rank a common set of policy or project proposals, while a committee or deliberative body has room for only $k$ representative preference orders. 
The objective of ranking compression is to construct these $k$ rankings so that significant patterns in the $n$~original rankings remain proportionally represented.\footnote{Note that we do not require the output rankings to be distinct.
An interpretation of these rankings is as the votes of a representative body, and distinct representatives may have identical preferences.
For example, if every voter in the population has the same preference, a representative body of size~$k$ would naturally consist of $k$~distinct representatives who all share the same consensus ranking.
}

Despite the seemingly simple setup, formalizing the concept of ``optimal representation'' in this context is far from trivial. 
Indeed, compressing a large preference profile into a smaller set is inherently lossy, thereby requiring a formal definition of what it means to faithfully reflect the original preferences. 
Since there is no single canonical choice---and the ideal tradeoff inevitably depends on the downstream task---we instead adopt an axiomatic approach, seeking to establish precisely which properties can or cannot be mathematically preserved under compression.

\paragraph{Our results.}

We introduce three natural axioms that capture distinct facets of proportional representation in ranking compression. 
\begin{itemize}[leftmargin=*, nosep]
    \item \emph{Positional proportionality}: The fraction of input rankings that place a candidate at or above a given rank must be preserved in the output.
    \item \emph{Pairwise proportionality}: The pairwise comparison between candidates in the input instance must be preserved in the output.
    \item \emph{Proportionality for solid coalitions (PSC)}: If a group of input rankings ranks a set of candidates above all remaining candidates, this must be reflected in the output. 
    This extends the well-known multiwinner voting axiom (with the same name) to our setting~\citep{Dummett84,AzizLe20,BrillPe23}.
\end{itemize}

Our findings offer a holistic analysis of these axioms, demonstrating both existence guarantees (all with accompanying efficient algorithms) as well as impossibility results. 

On the positive side, we show that positional proportionality is always achievable for any given input instance and target compression size (\Cref{thm:positional-prop}).
We also prove that the same holds for PSC (\Cref{thm:PSC}).
Our proofs for these results leverage various combinatorial and algorithmic tools, including Hall's marriage theorem and network flow.

In contrast, we establish an impossibility for pairwise proportionality, by proving that it cannot necessarily be satisfied for any $k\ge 2$ and $n \ge k+2$ (\Cref{thm:pairwise-prop}).
Our counterexamples rely on intricate constructions of constraint graphs and arguments about when the corresponding constraints can be fulfilled by a set of rankings.
Despite this negative result, we show that pairwise proportionality can be guaranteed under \emph{single-peaked} or \emph{single-crossing} preferences, two well-studied restricted domains in social choice (\Cref{thm:single-peaked-simul-prop,thm:pairwise-prop-single-crossing}). 
Moreover, for single-peaked instances we can ensure that the output simultaneously satisfies the other two proportionality notions and preserves single-peakedness, while for single-crossing instances we can select the output from the input rankings themselves in such a way that it also satisfies PSC.

\paragraph{Related work.}

In terms of problem formulation, our framework shares its basic premise with a ``subset selection'' problem focused on choosing a representative set of LLM benchmarks from a large pool~\citep{ProcacciaScWa25}. 
That prior work similarly requires outputting $k$ rankings based on an input of $n$ rankings, and defines an approximate version of positional proportionality. 
However, it crucially demands the output rankings to be drawn directly from the input rankings. 
By 
removing this restriction, our setting offers a much larger solution space, thereby necessitating an entirely different set of combinatorial tools.
In fact, as we demonstrate in \Cref{prop:output-from-input} and Appendix~\ref{app:single-crossing-output-from-input}, most of our positive results cease to hold under the output-from-input restriction.

Conceptually, ranking compression shares similarities with the clustering of rankings \citep{viappiani2015characterization,Faliszewski2023Diversity,faliszewski2026diversity}. 
The general objective of clustering is to partition a preference profile into $k$ disjoint sets, where the $k$ ``central'' rankings derived from these clusters can be interpreted as a meaningful compression of the profile. 
However, existing clustering frameworks differ from our approach in several fundamental ways. 
For example, clusters are typically allowed to vary in size, which suggests that the $k$ representative rankings should be interpreted as weighted. 
Additionally, a core tenet of traditional clustering is that each voter is explicitly assigned to a specific cluster center. In contrast, our work focuses on achieving a global axiomatic representation without requiring any such individual assignments.

Within the clustering literature, \citet{viappiani2015characterization} examines a standard distance-based clustering method designed to find $k$ rankings that minimize the sum of distances. 
Recent research has also utilized clustering to measure election polarization \citep{Faliszewski2023Diversity} and investigated the computational complexity of identifying optimal cluster centers, such as $k$-Kemeny scores \citep{faliszewski2026diversity}. Further afield are notions of proportional clustering in general metric spaces~\citep{chhabra2021overview,chen2019proportionally,micha2020proportionally,aziz2024proportionally,kellerhals2024proportional}. 
While these notions do not normally involve rankings, they can in principle be extended to our setting using ranking-specific metrics such as the Kendall-tau distance or Spearman's footrule.  

Recent work in social choice theory has addressed proportionality in ranking aggregation \citep{LedererPeWa24,Lederer26}.
This line of work assumes a single output ranking and aims for proportionality to be reflected within that ranking itself, which makes the challenges significantly different from ours.
In particular, with $k = 1$ output ranking, satisfying all three of our proportionality notions is trivial: if all input rankings are the same, return that ranking; otherwise, return an arbitrary ranking.
A related model involves aggregating approval votes into a proportional ranking \citep{SkowronLaBr16}.

Finally, while notions of compression exist within social choice theory, they differ significantly from our approach. 
For instance, \emph{compilation complexity}~\citep{maudet2009compiling,xia2010compilation,Chevaleyre2} measures the minimum number of bits needed to encode the votes of a subpopulation without altering the final election outcome. 
Another related notion is the \emph{communication-distortion tradeoff}~\citep{mandal2019efficient,mandal2020optimal}, which concerns how to efficiently compress real-valued utility functions into fixed-size ballots in order to maximize social welfare. 

\section{Preliminaries}
\label{sec:prelim}

For any positive integers~$z\le z'$, let $[z] = \{1,\dots,z\}$, $[z]_0 = \{0,1,\dots,z\}$, and $[z,z'] = \{z,{z+1},\dots,z'\}$.
Let $N = [n]$ be a set of $n$~voters and $C = [m]$ be a set of $m$~candidates, for some positive integers $n,m$.
Each voter $i\in N$ has a strict ranking $\pi_i = (c_{i,1},\dots,c_{i,m})$ of the candidates in~$C$; for $\ell\in[m]$, we say that $c_{i,\ell}$ is in the $\ell$-th rank. 
The goal is to compress these $n$ input rankings into an output  consisting of $k$ rankings, where $k\in[n]$ is a given target compression size.\footnote{While the case $k = n$ is uninteresting since one may simply output the $n$ input rankings, we allow it in our formal model.} 
These $k$ output rankings need not come from the $n$ input rankings, nor must they be distinct.
An \emph{instance} of the ranking compression problem is described by a tuple $(N, C, (\pi_i)_{i\in N}, k)$.

We now define three notions of proportionality for the output with respect to the input rankings.

\textbf{Positional proportionality:}
Given an instance, an output is said to satisfy \emph{positional proportionality} if the following holds:
For any $t\in [k]$ and $\ell \in [m]$, if a candidate appears in the top $\ell$ ranks in at least $t\cdot n/k$ input rankings, then it must appear in the top $\ell$ ranks in at least $t$ output rankings.

\textbf{Pairwise proportionality:}
Given an instance, an output is said to satisfy \emph{pairwise proportionality} if the following holds:
For any $t\in[k]$ and candidates $c,c'\in C$, if $c$ is ranked higher than $c'$ in at least $t\cdot n/k$ input rankings, then $c$ must be ranked higher than $c'$ in at least $t$ output rankings.

\textbf{PSC:} Given an instance, an output is said to satisfy \emph{proportionality for solid coalitions (PSC)} if the following holds:
For any $t\in[k]$ and $C'\subseteq C$, if at least $t\cdot n/k$ input rankings have all candidates from~$C'$ in the top $|C'|$ ranks (but may order the candidates from~$C'$ differently within the top $|C'|$ ranks), then at least $t$ output rankings must have all candidates from~$C'$ in the top $|C'|$ ranks.

Intuitively, pairwise proportionality requires the output to preserve the voters' local preferences: if a fraction of voters prefers candidate~$c$ to candidate~$c'$, roughly the same fraction of output rankings should rank $c$ above~$c'$.
PSC takes a different route: rather than preserving every pairwise comparison, it ensures that every sufficiently large group of voters that agree on the top candidates is given a proportional voice among the $k$~rankings. 
On the other hand, positional proportionality can be viewed as demanding proportional representation from the candidates' perspective: candidates that frequently occupy prominent positions in the input should retain a proportional presence in those positions in the output.

As one may expect, all three notions are incomparable.
To see this formally, consider an instance with $(n,m,k) = (4,3,2)$ where two voters have the ranking $(1,2,3)$, one voter has the ranking $(2,3,1)$, and one voter has the ranking $(3,1,2)$.
\begin{itemize}
\item The output $(1,3,2), (3,2,1)$ satisfies positional proportionality, but fails pairwise proportionality (for the pair $(2,3)$) and PSC (for the set $\{1,2\}$).
\item The output $(2,1,3), (3,1,2)$ satisfies pairwise proportionality, but fails positional proportionality (for candidate~$1$) and PSC (for the set $\{1\}$).
\item The output $(1,2,3), (2,1,3)$ satisfies PSC, but fails positional proportionality (for candidate~$3$) and pairwise proportionality (for the pair $(3,1)$).
\end{itemize}

The following proposition demonstrates that requiring the output rankings to come from the input rankings severely limits the potential for attaining proportionality.

\begin{proposition}
\label{prop:output-from-input}
Suppose that the output rankings must be drawn from the input rankings.
Then, there exists an instance such that none of positional proportionality, pairwise proportionality, and PSC can be attained.
\end{proposition}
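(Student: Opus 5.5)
The plan is to exhibit one small explicit instance and verify it by exhausting all admissible outputs. I read ``drawn from the input rankings'' as: every output ranking is a copy of some $\pi_i$, with repetition allowed. Since repetition only enlarges the solution space, a counterexample under this reading also works if repetitions are forbidden.

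I would take $k=2$. For $t=1$ the threshold is $n/2$, and for $t=2$ the requirement only concerns properties shared by all voters. The natural first try, $n=3$, does not work: any two distinct inputs already cover every property held by at least $2$ voters. So I would take $n=4$ (or $n=6$ if needed). Then every property held by at least $2$ voters must be witnessed by one of only two chosen input rankings, and the profile should make this impossible.

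The construction idea is a cyclic (Condorcet-like) structure, in which each input ranking ``serves'' only a few of the majority-level requirements.
\begin{itemize}
\item For pairwise proportionality, I would arrange the voters so that several pairs $(c,c')$ and their reverses each have support at least $n/2$. No two input rankings should jointly realise both orientations of every such pair. For example, on four candidates, try voters $(1,2,3,4)$, $(2,3,4,1)$, $(3,4,1,2)$, $(4,1,2,3)$. Here every ordered pair has support $1$, $2$ or $3$, and one checks which pairs of rankings realise all orientations of support $\ge 2$.
\item In the same profile, each candidate lies in the top $2$ of exactly two voters and in the top $3$ of exactly three. Positional proportionality then demands that the two chosen rankings contain all four candidates in their top-$2$ sets.
\item PSC demands that every set $C'$ which is the top-$|C'|$ set of at least $2$ voters appear as a top set of some chosen ranking.
\end{itemize}

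If the cyclic profile satisfies one of the three notions for some pair, for instance for the complementary pair $\{(1,2,3,4),(3,4,1,2)\}$, I would break the symmetry. To do so I would duplicate or perturb voters, or add a fifth candidate, so that the favoured pair loses coverage. Then I would recheck.

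The key steps, in order, are as follows. First, fix the profile and compute for each of the three notions the list of requirements with $t=1$ (support $\ge n/2$) and $t=2$ (unanimous). Second, enumerate all $\binom{n+1}{2}$ multisets of two input rankings. Third, for each multiset, exhibit one violated requirement of each type. Identical pairs fail trivially, since a single ranking has only one top-$2$ set, so essentially only distinct pairs need checking.

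The main obstacle is to find one profile that defeats all three notions simultaneously. The three notions are incomparable, as the example above with $(n,m,k)=(4,3,2)$ shows, so a profile tuned against one notion may admit a pair that satisfies another. I would first search by hand, and by a quick mental enumeration, over cyclic and near-cyclic profiles with $n\in\{4,6\}$ and $m\in\{4,5\}$, adjusting the profile whenever a pair escapes. If no single small profile works, I would take a disjoint ``product'' of three gadgets, one per notion, on disjoint candidate blocks, with rankings concatenated blockwise. The gadgets must be aligned so that a failure in any block persists in the combined instance.
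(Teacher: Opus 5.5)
There is a genuine gap. You give a search strategy rather than an instance, and the one concrete profile you propose is not a counterexample for \emph{any} of the three notions.

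In the cyclic profile $v_1=(1,2,3,4)$, $v_2=(2,3,4,1)$, $v_3=(3,4,1,2)$, $v_4=(4,1,2,3)$, every prefix set of size $1$, $2$ or $3$ is the top set of exactly one voter. So PSC imposes no nontrivial requirement at all. The complementary pair $\{v_1,v_3\}$ satisfies positional proportionality: its top-$2$ sets $\{1,2\}$ and $\{3,4\}$ cover all candidates, and every top-$3$ requirement has support of size $3$, so any two distinct voters meet it. The same pair also satisfies pairwise proportionality. The only ordered pairs with support exactly $2$ are $1\to3$, $3\to1$, $2\to4$, $4\to2$, with supports $\{v_1,v_4\}$, $\{v_2,v_3\}$, $\{v_1,v_2\}$, $\{v_3,v_4\}$, and each of these meets $\{v_1,v_3\}$. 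The remaining requirements have support $3$.

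Your fallbacks (perturbing, duplicating voters, adding candidates, or a blockwise product of one gadget per notion) are not carried out. The product step is sound in principle. Concatenating blocks preserves within-block comparisons, shifts positional thresholds by the sizes of earlier blocks, and turns a PSC set $S$ of a later block into the union of the earlier blocks with $S$. But it only reduces the proposition to constructing the gadgets, which is the entire content.

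The missing idea is an exact criterion that tells you what to build when $(n,k)=(4,2)$.
\begin{itemize}
\item Requirements with $t=2$ are unanimous, so any input ranking meets them.
\item A requirement with $t=1$ has a support of at least two voters. A pair of distinct inputs $\{\pi_a,\pi_b\}$ violates it iff that support is exactly the other two voters.
\item Hence a notion is unattainable iff each of the six pairs of voters is precisely the support of some requirement. The chosen voters would then have to form a vertex cover of $K_4$ of size $2$, which does not exist, and a repeated ranking fails a fortiori.
\end{itemize}
Your cyclic profile fails because its two-voter supports are only the four cyclically adjacent pairs, and for PSC there are none.

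The paper realizes all six pairs for all three notions at once with $m=6$. It uses three two-candidate blocks occupying positions $\{1,2\}$, $\{3,4\}$, $\{5,6\}$, each splitting the voters according to a different perfect matching of $K_4$: $\pi_1=(3,4,2,5,1,6)$, $\pi_2=(3,4,5,2,6,1)$, $\pi_3=(4,3,2,5,6,1)$, $\pi_4=(4,3,5,2,1,6)$. In block $i$, ``candidate $x$ comes first'' is simultaneously three requirements, each held by exactly two voters:
\begin{itemize}
\item a positional requirement ($x$ in the top $2i-1$);
\item a pairwise requirement ($x$ above its block partner);
\item a PSC requirement (the prefix ending at $x$).
\end{itemize}
So no separate gadget per notion is needed.
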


The full proof of \Cref{prop:output-from-input} can be found in Appendix~\ref{app:output-from-input}; we provide a sketch here.
Consider an instance with $(n, m, k) = (4, 6, 2)$ and
\begin{align*}
\pi_1 &= (3, 4, 2, 5, 1, 6), \hspace{0.5em}
\pi_2 = (3, 4, 5, 2, 6, 1), \hspace{0.5em}
\pi_3 = (4, 3, 2, 5, 6, 1), \hspace{0.5em}
\pi_4 = (4, 3, 5, 2, 1, 6).
\end{align*}
For positional proportionality, observe that for each pair of input rankings, there exist $c\in C$ and $\ell\in[m]$ such that candidate $c$ appears in the top $\ell$ ranks \emph{only} within that pair.
For example:
\begin{itemize}
  \item $\pi_1,\pi_2$: candidate $3$ appears in the top $1$ rank.
  \item $\pi_1,\pi_3$: candidate $2$ appears in the top $3$ ranks.
  \item $\pi_1,\pi_4$: candidate $1$ appears in the top $5$ ranks.
\end{itemize}
Analogous statements can be made for the remaining three pairs.
Now, each pair imposes a positional constraint on the output that requires selecting at least one of those two rankings. Because no output of size two can intersect all six pairs, positional proportionality cannot be achieved. 
One can verify that PSC and pairwise proportionality are also unattainable using similar arguments.

\section{Positional proportionality}
\label{sec:positional-prop}

In this section, we consider positional proportionality.
We show that an output satisfying positional proportionality always exists regardless of the input rankings and target compression size. 

\begin{theorem}
\label{thm:positional-prop}
For any instance, there exists an output that satisfies positional proportionality.
Moreover, such an output can be computed in polynomial time.
\end{theorem}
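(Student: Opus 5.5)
The plan is to reduce positional proportionality to an integer ``position-count'' matrix and then decompose that matrix into $k$ rankings. For candidate $c$ and rank $\ell$, let $a_c(\ell)$ be the number of input rankings placing $c$ in the top $\ell$ ranks, and set $r_c(\ell)=\floor{k\,a_c(\ell)/n}$. Positional proportionality for an output is equivalent to the following: for every $c,\ell$, at least $r_c(\ell)$ output rankings place $c$ in the top $\ell$ ranks. Indeed, $a_c(\ell)\ge t n/k$ iff $t\le r_c(\ell)$.

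Encode an output by the matrix $x_{c,\ell}$, the number of output rankings placing $c$ at rank exactly $\ell$. Any output yields a nonnegative integer $m\times m$ matrix whose row and column sums all equal $k$. Conversely, any such matrix is the biadjacency matrix of a $k$-regular bipartite multigraph between candidates and ranks. By K\H{o}nig's theorem (equivalently, Birkhoff--von Neumann for integer matrices) it decomposes into $k$ perfect matchings, i.e., $k$ rankings. This decomposition is computable in polynomial time by repeatedly extracting a perfect matching. It therefore suffices to find an integer matrix with all row and column sums equal to $k$ and prefix constraints $\sum_{\ell'\le \ell} x_{c,\ell'}\ge r_c(\ell)$ for all $c,\ell$.

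To find it, I will set up a flow network with lower bounds.
\begin{itemize}
\item Source $s$ has an edge to each candidate node $u_{c,m}$ with lower and upper bound $k$.
\item For each $c$ there is a chain $u_{c,m}\to u_{c,m-1}\to\dots\to u_{c,1}$, where edge $u_{c,\ell+1}\to u_{c,\ell}$ has lower bound $r_c(\ell)$ and infinite capacity.
\item Each $u_{c,\ell}$ has an edge to rank node $w_\ell$.
\item Each $w_\ell$ has an edge to the sink with lower and upper bound $k$.
\end{itemize}
The flow on $u_{c,\ell}\to w_\ell$ plays the role of $x_{c,\ell}$. The flow entering $u_{c,\ell}$ then equals $\sum_{\ell'\le\ell}x_{c,\ell'}$, so the lower bounds encode exactly the prefix constraints.

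Feasibility comes from a fractional witness. Take $y_{c,\ell}=(k/n)\cdot|\{i: c_{i,\ell}=c\}|$. Each row and column of $y$ sums to $k$, since every input ranking contributes one candidate per rank and one rank per candidate. Its prefix sums are $k a_c(\ell)/n\ge r_c(\ell)$. So $y$ defines a feasible real-valued flow satisfying all bounds. Since all bounds are integers, the integrality theorem for circulations with lower bounds gives an integral feasible flow, computable in polynomial time via the standard reduction to max-flow. Decomposing the resulting matrix as above finishes the argument.

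The main obstacle is the network design, specifically making the prefix (``top $\ell$'') constraints flow-representable while keeping row and column sums exact. The chain construction handles this, so I expect the remaining checks to be routine.
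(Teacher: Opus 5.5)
Your proposal is correct, but it builds the integer position-count matrix differently from the paper. The decomposition step is essentially the same in both. The paper's \Cref{lem:table-permutation} takes an $m\times k$ table with rows indexed by ranks, in which each candidate appears $k$ times, and permutes within rows via Hall's theorem until every column is a ranking. That is exactly your K\H{o}nig decomposition of the $k$-regular candidate--rank multigraph.

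The difference is in how the matrix is found. The paper uses no flows. Setting $r_c(0)=0$, it writes $r_c(\ell)-r_c(\ell-1)$ copies of $c$ into the table in row-major order for $\ell=1,\dots,m$. The only fact it needs is the counting bound $\sum_{c} r_c(\ell)\le \ell k$, which holds because the top $\ell$ ranks of the $n$ inputs contain only $\ell n$ slots. This bound forces everything written by stage $\ell$ into the first $\ell$ rows, and $r_c(m)=k$ makes the table exactly full. You instead round the fractional witness $y$ (the scaled average of the input permutation matrices) through a lower-bounded flow on a chain gadget. This is the same strategy the paper reserves for PSC in \Cref{thm:PSC}.

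The greedy fill is more elementary. It shows that the nested prefix constraints impose no obstruction beyond that trivial counting bound, and it gives a simple explicit $O(nm+m^{2.5}k)$ algorithm without appealing to integrality of circulations with lower bounds. Your formulation is more systematic and extensible. For example, you could also put upper capacities $\lceil k\,a_c(\ell)/n\rceil$ on the chain edges, since the witness $y$ still respects them. You would then get every prefix count between $\lfloor k\,a_c(\ell)/n\rfloor$ and $\lceil k\,a_c(\ell)/n\rceil$, a two-sided guarantee the greedy fill does not directly provide.
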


To prove this theorem, we will need the following combinatorial lemma, which can be shown using Hall's marriage theorem.
Recall that an $a\times b$ table has $a$~rows and $b$~columns.

\begin{lemma}
\label{lem:table-permutation}
Let $a$ and $b$ be any positive integers.
Given an $a \times b$ table in which each integer $c \in [a]$ appears exactly $b$~times, it is always possible to permute the integers in each row of the table so that in the resulting table, each column forms a permutation of $[a]$.
\end{lemma}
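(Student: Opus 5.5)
We argue by induction on $b$, the number of columns, peeling off one column at a time; each column is produced by a perfect matching obtained from Hall's marriage theorem. The base case $b=1$ is immediate: the table has $a$ rows, each holding one entry, and each integer in $[a]$ appears exactly once. So the single column is already a permutation of $[a]$.

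For the inductive step, suppose $b \ge 2$. We form a bipartite graph $G$. Its left side consists of the $a$ rows and its right side consists of the $a$ integers in $[a]$. Row $r$ is joined to integer $c$ whenever $c$ occurs in row $r$, and we give this edge multiplicity equal to the number of occurrences of $c$ in row $r$. In this multigraph every row has degree exactly $b$, because it has $b$ entries. Every integer $c$ also has degree exactly $b$, because it appears exactly $b$ times in the table. So $G$ is a $b$-regular bipartite multigraph.

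We now verify Hall's condition. Take any set $S$ of rows, and let $\Gamma(S)$ denote the integers adjacent to $S$. There are $b|S|$ edges leaving $S$, and all of them land in $\Gamma(S)$. Since each integer has degree $b$, the set $\Gamma(S)$ receives at most $b|\Gamma(S)|$ edges. Hence $b|S| \le b|\Gamma(S)|$, which gives $|\Gamma(S)| \ge |S|$. Hall's theorem therefore yields a perfect matching between rows and integers.

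Next we use the matching to build the first column. In each row $r$, we move one occurrence of its matched integer into the first column. Since the matching is perfect, this first column is a permutation of $[a]$. Deleting it leaves an $a \times (b-1)$ table in which each integer appears exactly $b-1$ times, because each integer lost exactly one occurrence. The inductive hypothesis then permutes the rows of this smaller table so that each of its columns is a permutation of $[a]$. Combining the two steps, each row of the original table has been permuted so that every column is a permutation of $[a]$.

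The only nontrivial step is the Hall-condition verification, and the degree-counting argument above handles it directly. Equivalently, this is König's theorem that a regular bipartite multigraph decomposes into perfect matchings. For the algorithmic use in \Cref{thm:positional-prop}, the proof is constructive: each matching can be found by bipartite matching or max-flow in polynomial time, and we repeat this $b$ times.
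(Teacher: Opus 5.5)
Your proposal is correct and follows essentially the same route as the paper: induction on $b$, with the first column extracted via a perfect matching between integers and rows obtained from Hall's theorem and a degree-counting check. The only cosmetic difference is that you verify Hall's condition from the side of the rows using a $b$-regular multigraph, whereas the paper checks it for sets of integers in the simple containment graph; by symmetry these are equivalent.
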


The proof of \Cref{lem:table-permutation} can be found in Appendix~\ref{app:omitted-positional-prop}.
We now proceed to prove \Cref{thm:positional-prop}.

\begin{proof}[Proof of \Cref{thm:positional-prop}]
We shall construct the output rankings by filling an $m \times k$ table, with each of the $k$ columns corresponding to an output ranking. 
For any $c \in C$ and $\ell \in [m]$, let $t_{c, \ell}$ be the largest integer such that candidate $c$ appears in the top $\ell$ ranks in at least $t_{c, \ell}\cdot n / k$ input rankings. 
It suffices to ensure that $c$ appears in the top $\ell$ ranks in at least $t_{c, \ell}$ output rankings. 
For convenience, we additionally let $t_{c, 0} = 0$ for all $c \in C$, so that we may define $d_{c, \ell} = t_{c, \ell} - t_{c, \ell - 1}$ for all $c \in C$ and $\ell \in [m]$. It is clear that $d_{c, \ell} \geq 0$ for all $c \in C$ and $\ell \in [m]$.

We will fill the table row by row from top to bottom, each row from left to right. 
Iterating over $j \in[m]$ in increasing order, and then over $c \in C$ in arbitrary order, fill $d_{c, j}$ cells with $c$.
After performing this process up to (and including) the step where $j = \ell$ for some $\ell\in[m]$, the table contains $\sum_{j \in [\ell]} d_{c, j} = t_{c, \ell} - t_{c, 0} = t_{c, \ell}$ occurrences of $c$. 
By definition of $t_{c, \ell}$, we have $\sum_{c \in C} t_{c, \ell}\cdot n/k \leq \ell n$, since the top $\ell$ ranks across the $n$ input rankings consist of $\ell n$ positions in total. 
Hence, it holds that $\sum_{c \in C} t_{c, \ell} \leq \ell k$.
Since we fill the table from top to bottom, all elements that we have filled up to this step are contained within the first $\ell$ rows of the table.
In particular, for any $c \in C$ and $\ell \in [m]$, candidate~$c$ appears at least $t_{c, \ell}$ times in the top $\ell$ rows of the table.
Moreover, since $t_{c, m} = k$ for every $c\in C$, each candidate~$c$ appears exactly $k$ times in the table.

We now apply \Cref{lem:table-permutation} to permute the elements in each row of the table so that each column forms a permutation of~$[m]$. 
Since we only permute elements within rows, for any $c \in C$ and $\ell \in [m]$, it still holds that $c$ appears at least $t_{c, \ell}$ times in the top $\ell$ rows of the table. 
Therefore, by letting the $k$ columns of the table be the $k$ output rankings, we obtain an output that satisfies positional proportionality.

From this existence proof, we can also derive a polynomial-time algorithm to produce the output. 
To this end, let $r_{c, \ell}$ be the number of input rankings in which candidate~$c$ appears in the top $\ell$ ranks. 
First, we compute $r_{c, \ell}$ for all $c \in C$ and $\ell \in [m]$ in $O(nm + m^2)$ time by iterating through the input rankings appropriately.
From the values of $r_{c, \ell}$, we compute $t_{c, \ell} = \left \lfloor r_{c, \ell} \cdot k/n \right\rfloor$, and then perform our previously described process of filling the $m \times k$ table.

With the $m \times k$ table, we compute $s_{c, \ell}$, the number of times $c$ appears in row $\ell$ of the table, for all $c \in C$ and $\ell \in [m]$.
Following our inductive proof of \Cref{lem:table-permutation}, we can construct each of the $k$ columns by constructing the described bipartite graph and running the \emph{Hopcroft--Karp algorithm}~\citep{hopcroft1973n} to find a perfect matching. 
After constructing each column, we update the counts $s_{c, \ell}$ to construct the subsequent graphs. 
Since each bipartite graph has $m$~vertices on each of the two sides, it takes $O(m^2)$ time to construct the graph by inspecting the values of $s_{c, \ell}$. 
Each run of the Hopcroft--Karp algorithm takes $O(m^{2.5})$ time~\citep{hopcroft1973n}. 
As we construct $k$ bipartite graphs, the overall time complexity is $O(nm + m^2k + m^{2.5}k) = O(nm + m^{2.5}k)$.
\end{proof}

\section{Proportionality for solid coalitions}
\label{sec:PSC}
In this section, we consider proportionality for solid coalitions (PSC), which captures a different form of aligned preferences than positional proportionality by focusing on groups of voters with shared top-ranked candidates. We show that, as with positional proportionality, an output satisfying PSC always exists and can be computed in polynomial time.

\begin{figure}[t]
\centering
\newsavebox{\dagboxtwo}
\sbox{\dagboxtwo}{%
\begin{tikzpicture}[
    vertex/.style={rectangle, rounded corners=1.5mm, draw=black, minimum height=0.55cm, minimum width=0.7cm,
                   inner sep=2pt, font=\footnotesize},
    fullvertex/.style={rectangle, rounded corners=1.5mm, draw=black, minimum height=0.65cm, minimum width=0.8cm,
                       inner sep=2pt, font=\scriptsize},
    normaledge/.style={->, >=stealth, draw=black!25, line width=0.6pt},
    patha/.style={blue!60!black, line width=1.5pt, ->, >=stealth},
    pathb/.style={green!60!black, line width=1.5pt, dashed, ->, >=stealth},
    bound/.style={label={[font=\tiny, text=black!80, label distance=-2pt]below:$\ge#1$}}
]

\node[vertex, bound=3] (empty) at (0.0, 0) {$\emptyset$};

\node[vertex, bound=1] (one) at (1.6, 1.5) {$1$};
\node[vertex] (two) at (1.6, 0.5) {$2$};
\node[vertex, bound=1] (three) at (1.6, -0.5) {$3$};
\node[vertex, bound=1] (four) at (1.6, -1.5) {$4$};

\node[vertex, bound=1] (onetwo) at (3.2, 2.5) {$1,2$};
\node[vertex] (onethree) at (3.2, 1.5) {$1,3$};
\node[vertex] (onefour) at (3.2, 0.5) {$1,4$};
\node[vertex] (twothree) at (3.2, -0.5) {$2,3$};
\node[vertex] (twofour) at (3.2, -1.5) {$2,4$};
\node[vertex, bound=1] (threefour) at (3.2, -2.5) {$3,4$};

\node[vertex, bound=1] (onetwothree) at (4.800000000000001, 1.5) {$1,2,3$};
\node[vertex, bound=1] (onetwofour) at (4.800000000000001, 0.5) {$1,2,4$};
\node[vertex, bound=1] (onethreefour) at (4.800000000000001, -0.5) {$1,3,4$};
\node[vertex] (twothreefour) at (4.800000000000001, -1.5) {$2,3,4$};

\node[vertex, bound=3] (full) at (6.4, 0) {$1,2,3,4$};

\draw[normaledge] (empty) -- (one);
\draw[normaledge] (empty) -- (two);
\draw[normaledge] (empty) -- (three);
\draw[normaledge] (empty) -- (four);

\draw[normaledge] (one) -- (onetwo);
\draw[normaledge] (one) -- (onethree);
\draw[normaledge] (one) -- (onefour);
\draw[normaledge] (two) -- (onetwo);
\draw[normaledge] (two) -- (twothree);
\draw[normaledge] (two) -- (twofour);
\draw[normaledge] (three) -- (onethree);
\draw[normaledge] (three) -- (twothree);
\draw[normaledge] (three) -- (threefour);
\draw[normaledge] (four) -- (onefour);
\draw[normaledge] (four) -- (twofour);
\draw[normaledge] (four) -- (threefour);

\draw[normaledge] (onetwo) -- (onetwothree);
\draw[normaledge] (onetwo) -- (onetwofour);
\draw[normaledge] (onethree) -- (onetwothree);
\draw[normaledge] (onethree) -- (onethreefour);
\draw[normaledge] (onefour) -- (onetwofour);
\draw[normaledge] (onefour) -- (onethreefour);
\draw[normaledge] (twothree) -- (onetwothree);
\draw[normaledge] (twothree) -- (twothreefour);
\draw[normaledge] (twofour) -- (onetwofour);
\draw[normaledge] (twofour) -- (twothreefour);
\draw[normaledge] (threefour) -- (onethreefour);
\draw[normaledge] (threefour) -- (twothreefour);

\draw[normaledge] (onetwothree) -- (full);
\draw[normaledge] (onetwofour) -- (full);
\draw[normaledge] (onethreefour) -- (full);
\draw[normaledge] (twothreefour) -- (full);

\draw[patha] (empty) -- (one);
\draw[patha] (one) -- (onetwo);
\draw[patha] (onetwo) -- (onetwofour);
\draw[patha] (onetwofour) -- (full);

\draw[pathb] (empty) -- (three);
\draw[pathb] (three) -- (threefour);
\draw[pathb] (threefour) -- (onethreefour);
\draw[pathb] (onethreefour) -- (full);
\end{tikzpicture}
}
\begin{subfigure}[b]{0.35\textwidth}
\centering
\begin{minipage}[b][\ht\dagboxtwo][c]{\textwidth}
\centering
\renewcommand{\arraystretch}{1.15}
\begin{tabular}{c}
\toprule
\textbf{Voters' rankings} \\
\midrule
$\pi_1 = (1,2,4,3)$ \\
$\pi_2 = (1,2,3,4)$ \\
$\pi_3 = (4,3,1,2)$ \\
$\pi_4 = (3,4,1,2)$ \\
$\pi_5 = (3,2,1,4)$ \\
$\pi_6 = (4,2,1,3)$ \\
\bottomrule
\end{tabular}
\end{minipage}
\caption{Instance with $(n, m, k) = (6, 4, 3)$}
\label{fig:profile_table}
\end{subfigure}%
\hfill
\begin{subfigure}[b]{0.65\textwidth}
\centering
\usebox{\dagboxtwo}
\caption{Constructed DAG with PSC constraints}
\label{fig:dag}
\end{subfigure}
\caption{An example instance with $(n,m,k)=(6,4,3)$ and the 
corresponding DAG. The constraints below each vertex indicate the lower bound $q_S$ whenever it is nonzero.
The solid blue path 
$\emptyset\!\to\!\{1\}\!\to\!\{1,2\}\!\to\!\{1,2,4\}\!\to\!\{1,2,3,4\}$ 
and the dashed green path 
$\emptyset\!\to\!\{3\}\!\to\!\{3,4\}\!\to\!\{1,3,4\}\!\to\!\{1,2,3,4\}$ 
correspond to the rankings $(1,2,4,3)$ and $(3,4,1,2)$, respectively.}
\label{fig:dag_combined_two}
\end{figure}

Before we describe the result, we introduce some necessary notation. 
For any ranking $\pi$ over $C$ and any size $\ell \in [m]_0$, let $\Top{\ell}{\pi}$ denote the set of the top $\ell$ candidates in $\pi$. 
For a subset $S \subseteq C$, we define $n_S := |\{i \in N \mid \Top{|S|}{\pi_i} = S \}|$ as the number of voters who rank the candidates from $S$ in their top $|S|$ positions. 
The PSC condition requires that for any $S \subseteq C$ and $t \in [k]$, if $n_S \ge t \cdot n/k$, then at least $t$ of the $k$ output rankings have $S$ as their top $|S|$ candidates. Equivalently, for any $S \subseteq C$, at least $q_S := \floor{n_S \cdot k/n}$ of the output rankings $\pi$ must satisfy $\Top{|S|}{\pi} = S$.

Since PSC depends only on prefixes of rankings, it is helpful to model rankings as chains of growing subsets.
To this end, we construct a directed acyclic graph (DAG) $G^C = (V^C, E^C)$ to model the rankings over $C$. The vertex set $V^C = \mathcal{P}(C)$ consists of all subsets of $C$. 
A directed edge $(S, S') \in E^C$ exists if and only if $S \subset S'$ and $|S'| = |S| + 1$. 
There is a natural bijection between the directed paths from $\emptyset$ to $C$ in $G^C$ and the strict rankings over $C$. Specifically, a ranking $\pi = (c_1, c_2, \ldots, c_m)$ corresponds to the path $\emptyset = \Top{0}{\pi} \to \Top{1}{\pi} \to \cdots \to \Top{m}{\pi} = C$. 
In this graph formulation, selecting $k$ rankings satisfying PSC is equivalent to selecting $k$ paths from $\emptyset$ to $C$ such that for every $S \subseteq C$, at least $q_S$ paths visit the vertex $S$. 
\Cref{fig:dag_combined_two} provides a visual representation of this structure.

It is worth noting that the existence of a valid output does not follow directly from simple greedy heuristics.
To see this, consider the instance in~\Cref{fig:profile_table} with $(n,m,k) = (6,4,3)$.  
A variety of greedy approaches might first produce the following two paths, as illustrated in \Cref{fig:dag}:
\[
    \{1\} \to \{1,2\} \to \{1,2,4\} \to \{1,2,3,4\}
    \quad \text{and} \quad
    \{3\} \to \{3,4\} \to \{1,3,4\} \to \{1,2,3,4\}. 
\]
These routing decisions might appear ``locally optimal'' for this instance because, at every step, each path opportunistically extends to an adjacent vertex with an unmet visit quota. 
This choice is seemingly justified by the assumption that the vertex must eventually be reached by \emph{some} path. 
Ultimately, however, these two locally optimal paths cannot be extended to a PSC output for $k=3$, as the outstanding quotas for vertices $\{4\}$ and $\{1, 2, 3\}$ cannot be satisfied by a single path.

This example shows that purely local strategies are insufficient: constructing a valid path cover requires a more global view of the DAG. 
In the following theorem, we demonstrate how to obtain such a solution by leveraging classical results from network flow.

\begin{theorem}
\label{thm:PSC}
For any instance, there exists an output that satisfies PSC.
Moreover, such an output can be computed in polynomial time.
\end{theorem}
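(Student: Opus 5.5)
We will prove \Cref{thm:PSC} by casting the path-selection problem on $G^C$ as a flow problem with lower bounds, showing that the natural fractional flow coming from the input satisfies all constraints, and then using integrality. Concretely, split every vertex $S\in V^C$ into $S^{\mathrm{in}}\to S^{\mathrm{out}}$ with lower bound $q_S$ on this arc (and no upper bound beyond $k$). Keep the edges of $E^C$ as arcs $S^{\mathrm{out}}\to S'^{\mathrm{in}}$. The source is $\emptyset$, the sink is $C$, and we require a flow of value exactly $k$. Any integral flow of value $k$ in this DAG decomposes into $k$ source--sink paths, i.e.\ $k$ rankings. Each vertex $S$ is visited by exactly as many paths as the flow through its split arc, so an integral feasible flow gives an output satisfying PSC.

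\textbf{Existence of a fractional feasible flow.} Take each input ranking $\pi_i$ as a path and give it weight $k/n$. The total flow value is $k$, and the flow through vertex $S$ is $n_S\cdot k/n\ge \floor{n_S k/n}=q_S$. So this is a feasible real-valued flow respecting all lower bounds. Since all lower bounds and the demand $k$ are integers, the integrality theorem for flows with lower and upper bounds (via the standard reduction to a max-flow/circulation problem) yields an integral feasible flow of value $k$. For a cleaner argument we can use circulation form: add an arc $C\to\emptyset$ with lower and upper bound $k$. The constraint matrix is a network matrix and hence totally unimodular, so the feasible polytope has integral vertices. It is nonempty because of the fractional witness above.

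\textbf{Polynomial time.} This is the main obstacle, since $G^C$ has $2^m$ vertices. The fix is to restrict to the subgraph of vertices $S$ that are prefixes of some input ranking, i.e.\ $S=\Top{\ell}{\pi_i}$. Together with their edges, these give a DAG with at most $n(m+1)$ vertices that contains every vertex with $q_S>0$. The fractional witness uses only input paths, so it already lives in this subgraph, and the integrality argument applies verbatim there. Therefore we build the restricted network (polynomial size), solve the lower-bounded flow problem by the standard reduction to a single max-flow computation, and decompose the resulting integral flow into $k$ paths. Each path is a chain of prefix sets growing by one element, so it reads off a ranking.

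Any further lower-bound vertices are automatically covered by the restriction. The remaining care point is checking that the restricted graph still has all edges between consecutive prefixes of a single input ranking, which holds by construction.
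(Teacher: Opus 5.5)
Your proposal is correct and takes essentially the same approach as the paper. Both model rankings as $\emptyset$--$C$ paths in the subset DAG with vertex lower bounds $q_S$, route $k/n$ along each input path as a fractional witness, use integrality of lower-bounded flows, decompose the integral flow into $k$ paths, and restrict to prefix sets of input rankings to get polynomial time. The only cosmetic difference is in how integrality is justified: you use total unimodularity of the circulation with the return arc fixed at $k$, whereas the paper gives an explicit vertex-splitting/super-source reduction to max-flow, caps the return arc at $\lceil \mathrm{val}(f)\rceil$, and uses the source demand $q_\emptyset = k$ to force value exactly $k$.
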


\begin{proof}
We model the problem with a DAG as described above.
Finding a collection of paths that satisfy vertex-visit lower bounds is a classic network flow setup. 
One can imagine routing $k$ units of flow from $\emptyset$ to $C$, where each vertex $S$ demands that at least $q_S$ units of flow pass through it. 
Formally, let $G = (V, E)$ be a directed graph with a \emph{source} $s \in V$ (having no incoming edges) and a \emph{sink} $t \in V$ (having no outgoing edges). Let $L: V \to \mathbb{R}_{\ge 0}$ be a vertex lower bound function. 
A flow in this network corresponds to an assignment of nonnegative weights to the edges, $f: E \to \mathbb{R}_{\ge 0}$. For any vertex $v \in V$, let $\mathrm{In}(v)$ and $\mathrm{Out}(v)$ denote the sets of its in-neighbors and out-neighbors, respectively. 
We define the total incoming and outgoing flows at $v$ as $f_{\mathrm{in}}(v) = \sum_{u \in \mathrm{In}(v)} f(u, v)$ and $f_{\mathrm{out}}(v) = \sum_{w \in \mathrm{Out}(v)} f(v, w)$. The total flow passing through a vertex $v$ is given by $f(v) := f_{\mathrm{in}}(v)$ for all $v \in V \setminus \{s\}$, and $f(s) := f_{\mathrm{out}}(s)$. A flow $f$ is considered \emph{valid} if it satisfies two conditions:
\begin{itemize}[leftmargin=*,nosep]
    \item \emph{Flow conservation}: For all intermediate vertices $v \in V \setminus \{s, t\}$, $f_{\mathrm{in}}(v) = f_{\mathrm{out}}(v)$.
    \item \emph{Demand constraints}: For all $v \in V$, $f(v) \ge L(v)$.
\end{itemize}
The value of the flow, denoted $\mathrm{val}(f)$, is the net amount of flow leaving the source, $f_{\mathrm{out}}(s)$ (which is equivalently the flow entering the sink, $f_{\mathrm{in}}(t)$).

Standard network flow integrality theorems typically apply to edge upper bounds, but they extend to vertex lower bounds via a known reduction~\citep{ahuja1993network}. For completeness, we state this as a lemma, with the proof deferred to Appendix~\ref{app:omitted-PSC}.

\begin{lemma}\label{lem:flow-integrality}
Let $G = (V, E)$ be a directed graph with source $s$ and sink $t$. Let $L: V \to \mathbb{Z}_{\ge 0}$ be an integral vertex lower bound function. 
If there exists a valid (possibly fractional) flow $f$ satisfying the lower bounds, then there exists a valid integral flow $f^*$ satisfying the same bounds $f^*(v) \ge L(v)$ for all $v \in V$, such that $\mathrm{val}(f^*) \le \ceil{\mathrm{val}(f)}$. 
Furthermore, $f^*$ can be computed in time $O((|E|  + |V|) \sum_{v \in V} L(v))$.
\end{lemma}

We apply this framework to $G^C$ with source $s = \emptyset$, sink $t = C$, and integral lower bounds $L(S) = q_S$ for all $S \subseteq C$. We first construct a fractional flow $f$ of value $k$ satisfying these constraints by routing $k/n$ units of flow along each of the $n$ paths corresponding to the input rankings $\pi_1, \ldots, \pi_n$. Formally, for each edge $(S, S') \in E^C$, we set
\[
    f(S, S') = \frac{k}{n} \cdot \left|\left\{i \in N \,\middle|\, \Top{|S|}{\pi_i} = S \text{ and } \Top{|S'|}{\pi_i} = S'\right\}\right|.
\]
This flow satisfies conservation and has value $k$. For any intermediate vertex $S \in \mathcal{P}(C) \setminus \{ \emptyset, C\}$, the incoming and outgoing flows correspond exactly to $k$ times the fraction of voters who rank $S$ in their top $|S|$ positions. 
Thus, $f_{\mathrm{in}}(S) = f_{\mathrm{out}}(S) = n_S \cdot k/n$, which means that $f(S) = n_S \cdot k/n$. 
For the source and sink, we similarly have $f(\emptyset) = f(C) = n \cdot k/n = k$. The demand constraints are satisfied everywhere because for each $S \subseteq C$,
\[
    f(S) = n_S \cdot \frac{k}{n}  \ge \floor{n_S \cdot \frac{k}{n}} = L(S).
\]

Now, by \Cref{lem:flow-integrality}, there exists an integral flow $f^*$ of value $\mathrm{val}(f^*) \le \ceil{\mathrm{val}(f)} = k$ satisfying $f^*(S) \ge L(S)$ for all $S \subseteq C$. Note that any valid flow must have a value of exactly $k$, since the lower bound constraint on the source requires $\mathrm{val}(f^*) = f^*(\emptyset) \ge L(\emptyset) = \floor{n \cdot k/n} = k$. 

Because $G^C$ is a DAG and $f^*$ is an integral flow of value $k$, the flow~$f^*$ can be decomposed into $k$ directed paths $P_1, \ldots, P_k$ from $\emptyset$ to $C$.
Indeed, this can be done greedily by repeatedly extracting a path from $\emptyset$ to $C$ such that every edge on the path has a positive flow (which must be at least $1$), and subtracting $1$ from each edge on the path.
Since $f^*(S) \ge L(S)$ for all $S$, at least $L(S) = q_S$ paths visit each vertex $S$. 
By interpreting these $k$ paths as rankings, we satisfy the PSC condition.

Finally, we analyze the running time.
Note that we do not need to construct the entire DAG $G^C$ (which has $2^m$ vertices). 
Instead, we can restrict the network to the subgraph induced by the vertices and edges visited by the paths corresponding to the $n$ input rankings. 
Since each voter's ranking defines a path of $m + 1$ vertices and $m$ edges, this subgraph has at most $n(m+1)$ vertices and $nm$ edges. Furthermore, for any fixed subset size $\ell \in [m]_0$, each voter's path visits exactly one subset $S$ of size $\ell$.
Therefore, $\sum_{S : |S| = \ell} n_S = n$, which bounds the total demand as $\sum_{S : |S| =\ell} L(S) \le \sum_{S : |S| = \ell} n_S \cdot k/n = k$. Summing over all $m+1$ subset sizes yields $\sum_{S \subseteq C} L(S) \le k(m+1) = O(mk)$. 

Applying \Cref{lem:flow-integrality}, we can compute the integral flow $f^*$ on this restricted subgraph in $O((|E| + |V|) \sum_{S} L(S)) = O((nm) \cdot mk) = O(nm^2 k)$ time. 
Extracting the $k$ paths takes $O(nmk)$ time.
Thus, the overall process takes $O(nm^2k)$ time.
\end{proof}

\section{Pairwise proportionality}
\label{sec:pairwise-prop}

In this section, we turn our attention to pairwise proportionality.
While one might intuitively expect that a pairwise proportional output always exists, we prove a surprising impossibility result: for broad ranges of parameters $n$ and $k$, such an output cannot be guaranteed.

\begin{theorem}
\label{thm:pairwise-prop}
For any $k\ge 2$ and $n\ge k+2$, there exists an instance with $n$ voters and target compression size~$k$ such that no output satisfies pairwise proportionality.    
\end{theorem}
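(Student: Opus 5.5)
We will prove the statement by explicit constructions. Each construction is a profile whose pairwise constraints force, in every output of $k$ rankings, a configuration of pairwise preferences that no family of $k$ linear orders can realize.

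The basic tool is the \emph{constraint graph}. For each ordered pair $(c,c')$, let $w(c,c')$ be the number of input rankings placing $c$ above $c'$. Pairwise proportionality then demands at least $\lfloor w(c,c')\cdot k/n\rfloor$ output rankings with $c$ above $c'$. When $w(c,c')\ge n-n/k$ for one orientation, the pair is forced in at least $k-1$ outputs; when $w(c,c')\ge n/k$ for both orientations, each orientation must appear in at least one output.

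We would start with the base case $k=2$ and $n=4$, where the constraint reads: any pair on which the voters split at least $2$--$2$ must be oriented both ways across the two outputs. Two linear orders $\sigma_1,\sigma_2$ disagree exactly on a set of pairs $D$, and $D$ must be ``realizable.'' Concretely, the pairs outside $D$, together with their orientations, and $\sigma_1$ restricted to $D$, must together form an acyclic order. A clean obstruction is a cyclic structure. Take a $3$--$1$ majority cycle $a\succ b\succ c\succ a$ among some candidates, which forces both outputs to agree on each of these pairs and so produces a cycle. To make this work, we need every pair of the cycle to be split exactly $3$--$1$ in the same direction, which a single $4$-voter profile cannot do: Condorcet cycles with every margin $3$--$1$ are impossible with $4$ voters, since each voter breaks at least one of the three pairs of the cycle. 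So the contradiction has to mix forced agreements, from $3$--$1$ splits, with forced disagreements, from $2$--$2$ splits, across more candidates. The target configuration is a set of agreement-forced directed arcs $A$ and disagreement-forced pairs $D$ with two properties. First, $A$ is acyclic. Second, for every orientation of $D$ chosen for $\sigma_1$, with $\sigma_2$ taking the reverse, either $A\cup D_{\sigma_1}$ or $A\cup D_{\sigma_2}$ contains a cycle. For example, with pairs $\{a,b\}$ and $\{c,d\}$ both in $D$ and $A$ containing $a\to d$ and $c\to b$, $\sigma_1$ having $b\succ a$ and $d\succ c$ gives the cycle $b\to a\to d\to c\to b$. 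We would search, by hand or by computer, for a small profile ($m\approx 6$–$8$) realizing such a gadget with all orientations blocked. We expect this to be the main obstacle, because the $A$ and $D$ constraints must come from actual voter rankings, and the voters' own consistency works against creating cycles. The statement's hypothesis $n\ge k+2$ suggests that at $n=k+1$ a solution always exists, so the gadget must exploit the $\lfloor\cdot\rfloor$ rounding tightly.

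Next we would generalize the construction to arbitrary $k\ge2$ and $n\ge k+2$. The idea is to pad the profile with blocks of voters, or copies of candidates, so that the same threshold pattern appears. Pairs with $w\ge n-n/k$, i.e.\ at most $n/k$ dissenters, are forced in $k-1$ outputs. Pairs with balanced counts are forced in at least a certain number of outputs in each direction, summing to $k$. The $k=2$ argument then lifts: for each output ranking, the set of pairs it must ``agree'' on is acyclic only under specific orientation choices, and a counting or pigeonhole argument over the $k$ outputs shows some output inherits a cyclic set of required comparisons. We would handle $n$ not divisible by $k$ by choosing the numbers of voters in each block so that every relevant $w\cdot k/n$ lands just at or above the needed integer while the complementary count lands just below. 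The condition $n\ge k+2$ gives enough slack for exactly this.

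Finally, we would verify the construction in two steps. First, tabulate $w(c,c')$ for all relevant pairs and derive the forced multiplicities. Second, do a case analysis over the possible distributions of orientations of the $D$-pairs among the $k$ outputs, showing that each case produces a directed cycle in some output's forced comparisons. This contradicts that output being a linear order.
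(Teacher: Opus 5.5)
\textbf{There is a genuine gap.} You never exhibit an instance, and for an impossibility statement of this kind the construction \emph{is} the proof. The $k=2$ gadget is deferred to a search ``by hand or by computer'' with $m\approx 6$--$8$. The extension to all $k\ge 2$, $n\ge k+2$ is only gestured at (``pad the profile'', ``a counting or pigeonhole argument'', ``a case analysis over distributions of orientations'').

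The search is also misdirected on two counts. First, the paper reports that exhaustive search finds a pairwise proportional output for every $(n,k)=(4,2)$ instance with $m\le 7$; its counterexample needs $m=24$. Second, your criterion is wrong. For $(n,k)=(4,2)$ a $3$--$1$ split gives $\lfloor 3\cdot 2/4\rfloor=1$, so only \emph{one} output must follow the majority. Your own remark that $w\ge n-n/k$ forces $k-1=1$ outputs already says this. Agreement of both outputs is forced only by unanimous pairs, so placing $3$--$1$ arcs in $A$ over-constrains the outputs and could produce spurious ``counterexamples''.

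Once $A$ is restricted to unanimous comparisons and $D$ to $2$--$2$ pairs, your $A/D$ reformulation is essentially the paper's $k=2$ criterion: two outputs exist iff some ranking obeying all strong constraints satisfies a weak constraint on every directed cycle. The hard part, which you leave open, is building such cycles out of unanimous comparisons that real voters can all respect.

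The missing idea is the paper's cycle-linking gadget. For $(4,2)$ it takes four disjoint $6$-cycles $D_0,\dots,D_3$. Their arcs alternate between weak ($2$--$2$, odd to even) and strong (unanimous, even to odd). It adds strong arcs from every even vertex of $D_j$ to every odd vertex of $D_{j+1}$, indices mod $4$. Suppose a ranking obeys all strong constraints and satisfies some weak arc $u_j\wto v_j$ in every $D_j$. Then it satisfies the cycle $u_0\wto v_0\sto u_1\wto\cdots\wto v_3\sto u_0$, which is impossible. Hence it misses every weak arc of some $D_j$, and the second output would have to satisfy all of $D_j$. The profile $\pi_j=(E_j,S_{j+1,6},S_{j+2,4},S_{j+3,2},O_j)$ realizes these constraints.

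For general $(n,k)$ the paper does not pad voters; it is unclear how that would control the thresholds $tn/k$ for every $n$. Instead it builds $n$ cycles of length $2(n'+1)$ whose non-strong arcs have multiplicity $k'$. It adds strong arcs from $E_j$ to $O_{j+d'}$ for all $d'\in[n-n'-1]$. The same argument shows each output ranking (which must obey all strong constraints) satisfies a $k'$-arc in at most $n'+1$ cycles. By pigeonhole, some cycle is ``touched'' by at most $\lfloor k(n'+1)/n\rfloor$ outputs. The paper chooses $k'$ with $k'n\equiv 0$ or $-1\pmod k$ and sets $n'=\lceil k'n/k\rceil$, which gives $kn'\le k'n+1$. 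Together with $n\ge k+2$ this makes the bound at most $k'$, so some $k'$-arc in that cycle is met by at most $k'-1$ outputs. Nothing in your plan plays the role of this touching bound or of the number-theoretic choice of $k'$.
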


We begin by setting up the necessary preliminaries for the case $k = 2$.
For each pair of candidates $c, c' \in C$, let $t_{c, c'}$ be the largest integer such that $c$ is ranked higher than $c'$ in at least $t_{c, c'}\cdot n/2$ input rankings. 
When $t_{c, c'} = 2$, pairwise proportionality imposes the constraint that $c$ must be ranked higher than $c'$ in both output rankings, so we write $c \sto c'$ and call this a \textit{strong constraint}. 
When $t_{c, c'} = 1$, we only need $c$ to be ranked higher than $c'$ in one output ranking, so we write $c \wto c'$ and call this a \textit{weak constraint}. 
We say that a ranking $\pi$ \emph{satisfies a constraint $c \to c'$}, which can be either strong or weak, if $c$ is ranked higher than $c'$ in $\pi$.

We will model the constraints as a directed graph whose vertex set is $C = [m]$. 
We add a solid edge from $c$ to $c'$ whenever there is a strong constraint $c \sto c'$, and add a dotted edge from $c$ to $c'$ whenever there is a weak constraint $c \wto c'$.
For example, in the figure below, there are strong constraints $4\sto 1$ and $2\sto 3$, and weak constraints $1 \wto 2$ and $3 \wto 4$. 
We call the directed graph induced by the constraints a \emph{constraint graph}, and say that a constraint graph is satisfied by two output rankings if the number of rankings satisfying each constraint is at least that required by the type of the constraint.

{
\centering
    \begin{tikzpicture}[>=stealth, thick, scale=0.45]
        \tikzstyle{vertex}=[circle, draw, minimum size=11pt, inner sep=0pt, font=\bfseries]

        \node[vertex] (1) at (0, 2) {1};
        \node[vertex] (2) at (2, 2) {2};
        \node[vertex] (3) at (2, 0) {3};
        \node[vertex] (4) at (0, 0) {4};

        \draw[dotted, ->] (1) -- (2); 
        \draw[->] (2) -- (3);         
        \draw[dotted, ->] (3) -- (4); 
        \draw[->] (4) -- (1);         
    \end{tikzpicture}
    \par
}

Observe that the constraint graph generated by any input cannot contain a cycle consisting of only strong constraints. 
Indeed, suppose there is a cycle $c_1 \sto c_2 \sto \cdots \sto c_r \sto c_1$. 
By definition, this means that $c_1$ must be ranked higher than $c_1$ in every input ranking, which is impossible.

The following lemma allows us to characterize constraint graphs satisfiable by two output rankings.

\begin{lemma}
\label{lem:pairwise_prop_k_equals_2_condition}
When $k = 2$, a constraint graph can be satisfied by two output rankings if and only if there exists a ranking $\pi$ satisfying all strong constraints, such that for every directed cycle in the graph, $\pi$ satisfies at least one weak constraint in the cycle.
\end{lemma}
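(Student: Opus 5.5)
Both directions reduce to a statement about the two output rankings $\pi,\pi'$. Every strong constraint must hold in both, and every weak constraint must hold in at least one. Equivalently: both rankings satisfy all strong constraints, and the set $W_\pi$ of weak constraints \emph{violated} by $\pi$ must all be satisfied by $\pi'$.

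For the forward direction, suppose $\pi,\pi'$ satisfy the graph, and take $\pi$ as the witness. It satisfies all strong constraints. Now let $D$ be any directed cycle, and suppose for contradiction that $\pi$ satisfies none of the weak constraints of $D$. Since $\pi$ satisfies all strong constraints, it would then satisfy no edge of $D$ that is weak and every edge of $D$ that is strong. A ranking cannot satisfy every edge of a cycle, so $D$ must contain some weak edges. The contradiction comes from looking at $\pi'$ instead. Every weak edge of $D$ is violated by $\pi$, so $\pi'$ must satisfy it. Every strong edge of $D$ is also satisfied by $\pi'$. So $\pi'$ satisfies every edge of the cycle $D$, which is impossible. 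Hence $\pi$ satisfies at least one weak constraint on every cycle.

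For the converse, let $\pi$ satisfy all strong constraints and at least one weak constraint on every directed cycle. Take $\pi$ as the first output ranking. Let $H$ be the graph on $C$ consisting of all strong edges together with the weak edges violated by $\pi$, i.e.\ $W_\pi$. It suffices to find a second ranking $\pi'$ satisfying every edge of $H$, which amounts to showing that $H$ is acyclic and taking $\pi'$ to be a topological order of $H$. This acyclicity check is the crux.
\begin{itemize}
\item Any directed cycle in $H$ is a directed cycle of the constraint graph, since $H$ is a subgraph of it.
\item By the hypothesis on $\pi$, such a cycle contains a weak edge that $\pi$ satisfies.
\item But every weak edge of $H$ is violated by $\pi$, so this is a contradiction.
\end{itemize}
Hence $H$ is acyclic. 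The pair $\pi,\pi'$ then satisfies every strong constraint twice, and every weak constraint at least once: either $\pi$ satisfies it, or it lies in $W_\pi$ and $\pi'$ satisfies it.

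One edge case is worth noting: the constraint graph may contain both $c\wto c'$ and $c'\wto c$. This happens when the input splits exactly evenly on the pair. These two edges form a 2-cycle, and the argument above handles it without any change.
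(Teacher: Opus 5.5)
Your proof is correct and follows essentially the same route as the paper. The forward direction argues that if $\pi$ satisfied no weak edge of some cycle, then $\pi'$ would satisfy the whole cycle. The converse takes $\pi'$ to be a topological order of the strong edges plus the weak edges violated by $\pi$, and this subgraph is acyclic directly by the hypothesis on $\pi$.
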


\begin{proof}
($\Rightarrow$) Suppose that a constraint graph is satisfied by the output rankings $\pi$, $\pi'$. 
By definition, $\pi$ must satisfy all strong constraints. Suppose for contradiction that there exists a directed cycle in the graph such that none of its weak constraints is satisfied by $\pi$. 
This means that every weak constraint in the cycle must be satisfied by $\pi'$. 
Since every strong constraint in the cycle is also satisfied by $\pi'$, we have that $\pi'$ satisfies all constraints in the cycle, which is a contradiction.

($\Leftarrow$) Let $\pi$ be a ranking as described in the lemma statement.
We want to construct another ranking~$\pi'$ such that the two output rankings $\pi, \pi'$ together satisfy the constraint graph. To this end, it suffices to have $\pi'$ satisfy all strong constraints in addition to all weak constraints not satisfied by $\pi$. 
We will show that the subgraph induced by this subset of constraints is acyclic, and we can then take $\pi'$ to be any topological ordering of this subgraph. 

Suppose for contradiction that there exists a directed cycle in this subgraph. 
Since this cycle also exists in the original constraint graph, $\pi$ satisfies at least one weak constraint in the cycle, by definition of~$\pi$.
However, this means that this weak constraint should not exist in the subgraph, a contradiction. 
\end{proof}

We now proceed to prove \Cref{thm:pairwise-prop} for the basic case where $(n, k) = (4, 2)$.

\begin{proof}[Proof of \Cref{thm:pairwise-prop} for $(n,k) = (4,2)$]
We will specify a constraint graph, show that it cannot be satisfied by any two output rankings, and give four input rankings that generate the constraint graph.
These together suffice to complete the proof.

For $j \in [3]_0$, let $D_j$ denote the cycle 
\[6j + 1 \wto 6j + 2 \sto 6j + 3 \wto 6j + 4 \sto 6j + 5 \wto 6j + 6 \sto 6j + 1. \]
In particular, there are $m = 24$ candidates.
See \Cref{fig:four-six-vertex-cycles} for an illustration.

\begin{figure}[t]
    \centering
    \begin{tikzpicture}[>=stealth, thick, scale = 0.7]
    
    \tikzset{vertex/.style={circle, draw, minimum size=14pt, inner sep=0pt, font=\small\bfseries}}
    
    \newcommand{\hexcycle}[2]{%
      \foreach \i in {0,...,5} {%
        \pgfmathsetmacro{\angle}{90 - 60*\i} 
        \pgfmathsetmacro{\xcoord}{#1 + 1.5*cos(\angle)}
        \pgfmathsetmacro{\ycoord}{1.5*sin(\angle)}
        \pgfmathtruncatemacro{\label}{#2 + \i}
        \node[vertex] (v\label) at (\xcoord,\ycoord) {\label};
      }%
      \foreach \i in {0,...,5} {%
        \pgfmathtruncatemacro{\lab}{#2 + \i}
        \pgfmathtruncatemacro{\j}{mod(\i+1,6)}
        \pgfmathtruncatemacro{\labj}{#2 + \j}
        \ifodd\i
          \draw[->] (v\lab) -- (v\labj);
        \else
          \draw[dotted,->] (v\lab) -- (v\labj);
        \fi
      }%
    }
    
    \hexcycle{0}{1}    
    \hexcycle{4.5}{7}  
    \hexcycle{9}{13}   
    \hexcycle{13.5}{19} 

    \node at (0,0) {$D_0$};
    \node at (4.5,0) {$D_1$};
    \node at (9,0) {$D_2$};
    \node at (13.5,0) {$D_3$};

    \end{tikzpicture}

    \caption{A graph representing part of the constraints in the proof of \Cref{thm:pairwise-prop}.}
    \label{fig:four-six-vertex-cycles}
\end{figure}

For $j\in[3]_0$, denote the sets of odd-numbered and even-numbered vertices in $D_j$ by $O_j$ and $E_j$, respectively. 
We will take the indices $j$ of $D_j, O_j, E_j$ modulo $4$ hereafter. 
For each $j\in[3]_0$, we additionally impose a strong constraint $u \sto v$ for every pair $(u,v) \in (E_j, O_{j+1})$.

We claim that for every ranking $\pi$ that satisfies all strong constraints, there exists $j\in [3]_0$ such that none of the weak constraints in $D_j$ is satisfied by~$\pi$. 
Suppose for contradiction that there exists a ranking $\pi$ such that for each $j\in [3]_0$, there exists a weak constraint $u_j \wto v_j$ in $D_j$ that is satisfied by~$\pi$.
By our construction, $u_j \in O_j$ and $v_j \in E_j$. 
This means that there is a strong constraint $v_j \sto u_{j + 1}$ for each $j\in[3]_0$ as well. 
Thus, $\pi$ satisfies all constraints in the cycle
\[ u_0 \wto v_0 \sto u_1 \wto v_1 \sto u_2 \wto v_2 \sto u_3 \wto v_3 \sto u_0,\]
a contradiction. 
It follows that for every ranking $\pi$, there exists a cycle in the constraint graph in which no weak constraint is satisfied by $\pi$.
By \Cref{lem:pairwise_prop_k_equals_2_condition}, no two output rankings can satisfy this constraint graph.

It remains to construct four input rankings that generate the constraint graph. 
For each $j\in[3]_0$ and $j' \in [6]$, let $S_{j, j'}$ be the sequence of vertices that we obtain by traversing the cycle $D_j$ starting from vertex $6j + j'$ and following its edges. 
For example, $S_{2, 3}$ is the sequence $15, 16, 17, 18, 13, 14$.
We define $\pi_j = (E_j, S_{j + 1, 6}, S_{j + 2, 4}, S_{j + 3, 2}, O_j)$ for $j \in[3]_0$, and take $\pi_0, \pi_1, \pi_2, \pi_3$ as the four input rankings.

For each $\pi_j$, we first verify that the strong constraints from $E_{j'}$ to $O_{j' + 1}$ for all $j'\in[3]_0$ are satisfied.
For $j' = j$, $E_j$ is ranked higher than $S_{j + 1, 6}$; 
for $j' = j + 1$, $S_{j + 1, 6}$ is ranked higher than $S_{j + 2, 4}$; 
for $j' = j + 2$, $S_{j + 2, 4}$ is ranked higher than $S_{j + 3, 2}$;
for $j' = j + 3$, $S_{j + 3, 2}$ is ranked higher than~$O_j$.
Additionally, for all $j' \in [3]_0$ and $j'' \in [3]$, all strong constraints in $D_{j'}$ are satisfied by $S_{j', 2j''}$. Therefore, all strong constraints in $D_{j + 1}, D_{j + 2}, D_{j + 3}$ are satisfied by $\pi_j$. Moreover, $E_j$ is ranked higher than $O_j$ in $\pi_j$, so all strong constraints in $D_j$ are satisfied as well. 
This means that each $\pi_j$ satisfies all strong constraints.

Finally, observe that for any $j\in[3]_0$, each weak constraint in $D_j$ is satisfied by two of $S_{j, 2}, S_{j, 4}, S_{j, 6}$, which appear in $\pi_{j - 3}, \pi_{j - 2}, \pi_{j - 1}$, respectively.
Therefore, the four input rankings together generate all constraints in the graph.
This completes the proof.
\end{proof}

Our instance in the proof of \Cref{thm:pairwise-prop} requires $m = 24$, which is rather large. Interestingly, an exhaustive search revealed that pairwise proportionality can always be satisfied for $(n, k) = (4, 2)$ up to $m = 7$.
This suggests that even though a pairwise proportional outcome is not guaranteed to exist, such an outcome is likely to exist, at least when the parameters are sufficiently small.

The proof for general $k\ge 2$ and $n\ge k+2$ uses a similar approach as that of the case $(n,k) = (4,2)$.
The full details can be found in Appendix~\ref{app:proof-pairwise-prop}.
We also remark that these bounds on the parameters are tight.
Indeed, if $k = 1$ or $n\in\{k,k+1\}$, we claim that taking any $k$ input rankings as the output already works.
This is clear if $k = 1$ or $n = k$.
If $n = k+1$, then whenever there is a constraint that $i$ must be ranked higher than $j$ in at least $t$ output rankings, it must be that $i$ is ranked higher than $j$ in at least $\lceil t \cdot n/k\rceil = \lceil t(k+1)/k\rceil \ge t+1$ input rankings.
Since we exclude only one input ranking from the output, the constraint is satisfied.

Despite this negative result, we show that pairwise proportionality is always achievable when the input rankings are well-structured. Specifically, this guarantee holds for the two most widely studied classes of structured preferences, \emph{single-peaked} and \emph{single-crossing} \citep{ElkindLaPe17}; the definitions of these classes are deferred to Appendix~\ref{app:proof-restricted-preferences}.
For single-peaked instances, not only can we find an output satisfying pairwise proportionality, but we can also ensure that the output simultaneously fulfills the other two proportionality notions and is single-peaked with respect to the same candidate ordering.
For single-crossing instances, the output can be taken from the input rankings and additionally satisfies PSC.

\begin{theorem}
\label{thm:single-peaked-simul-prop}
For any single-peaked instance, there exists an output that satisfies pairwise proportionality, positional proportionality, and PSC simultaneously, such that every output ranking is single-peaked with respect to the same ordering of candidates as the input.
Moreover, such an output can be computed in polynomial time.
\end{theorem}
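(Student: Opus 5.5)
We would prove \Cref{thm:single-peaked-simul-prop} by encoding each single-peaked ranking as a monotone integer sequence, then building the output from coordinatewise order statistics (``quantiles'') of the input sequences. Relabel the candidates so that the input is single-peaked with respect to the axis $1<2<\dots<m$.

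\emph{Encoding.} For a ranking $\pi$ that is single-peaked with respect to this axis, every prefix $\Top{\ell}{\pi}$ is an interval $[x_\ell,\, x_\ell+\ell-1]$ of the axis. The left endpoints satisfy $x_1\ge x_2\ge\dots\ge x_m=1$ and $x_\ell-x_{\ell+1}\in\{0,1\}$. Conversely, every integer sequence with these properties and with $x_\ell+\ell-1\le m$ determines a unique single-peaked ranking. Let $x^{(i)}$ denote the sequence of voter~$i$.

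\emph{Construction.} For each $\ell\in[m]$, sort the values $x^{(1)}_\ell,\dots,x^{(n)}_\ell$ as $y_\ell^{(1)}\le\dots\le y_\ell^{(n)}$. Output ranking $j\in[k]$ is the one with sequence $z^{(j)}_\ell := y_\ell^{(\lceil jn/k\rceil)}$. These positions are distinct because $n\ge k$. We must check that each $z^{(j)}$ is a valid encoding. The $p$-th order statistic is monotone and $1$-Lipschitz in the sup norm. The vector $(x^{(i)}_{\ell+1})_i$ lies coordinatewise between $(x^{(i)}_\ell-1)_i$ and $(x^{(i)}_\ell)_i$, so $y^{(p)}_\ell-y^{(p)}_{\ell+1}\in\{0,1\}$. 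The constraints $y^{(p)}_m=1$ and $y^{(p)}_\ell\le m-\ell+1$ hold because they hold for every input. Hence all outputs are single-peaked with respect to the same axis.

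\emph{Reduction of each property to one coordinate.} Each of the three properties becomes the statement that a set of values of a single coordinate $x_\ell$ forming an interval is hit by enough outputs:
\begin{itemize}[nosep]
\item Positional: $c\in\Top{\ell}{\pi}$ iff $x_\ell\in[c-\ell+1,\,c]$.
\item PSC: for an interval $S$ with $|S|=\ell$, we have $\Top{\ell}{\pi}=S$ iff $x_\ell=\min S$. If $S$ is not an interval, then $n_S=0$ and the constraint is vacuous.
\item Pairwise: take $a<b$ and look at the prefix of size $b-a$. This interval cannot contain both $a$ and $b$, and it cannot lie strictly between them. A short case check then shows that $a$ is ranked above $b$ iff $x_{b-a}\le a$; the reverse comparison corresponds to $x_{b-a}\ge a+1$.
\end{itemize}

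\emph{Counting argument.} Fix $\ell$ and an interval $I$ of values. The inputs with $x_\ell\in I$ occupy a block $\{a+1,\dots,a+s\}$ of consecutive positions in the sorted order, with $0\le a$ and $a+s\le n$. Suppose $s\ge tn/k$. Output $j$ has $z^{(j)}_\ell\in I$ iff $a<\lceil jn/k\rceil\le a+s$. Since $a$ and $a+s$ are integers, this is equivalent to $a<jn/k\le a+s$, that is, $j\in(ak/n,(a+s)k/n]$. This half-open interval has length at least $t$ and lies inside $(0,k]$, so it contains at least $t$ integers $j\in[k]$. All three properties follow.

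\emph{Running time.} Computing the encodings and sorting each coordinate takes polynomial time, about $O(nm\log n)$.

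The step we expect to be the main obstacle is the pairwise characterization. We must confirm that ``$a$ is ranked above $b$'' depends only on the single coordinate $x_{b-a}$ and is a threshold condition on it, including the edge cases where the prefix lies entirely to the left of $a$ or entirely to the right of $b$. The Lipschitz property of order statistics and the integer-rounding count are routine.
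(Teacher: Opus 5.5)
Your proof is correct, and it takes a genuinely different route from the paper's.

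\textbf{The paper's route.} For each $d\in[k]$ the paper builds a constraint graph $G_d$, with edges $c\to c'$ when $c<c'$ and $t_{c,c'}\ge d$, or when $c>c'$ and $t_{c,c'}\ge k-d+1$. It extracts a single-peaked topological order from $G_d$ by the endpoint-peeling procedure of \Cref{alg:topo_sort_gd}. Pairwise proportionality then follows immediately from the topological-order property. Positional proportionality and PSC each need a separate induction on $d$ (\Cref{lem:single-peaked-simul-prop-induct,lem:single-peaked-simul-prop-induct-2}, each with its own helper lemma).

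\textbf{Your route.} You encode each ranking by the left endpoints $x_\ell$ of its prefix intervals and take coordinatewise order statistics at ranks $\ceil{jn/k}$. You then observe that all three axioms are statements of the form ``$x_\ell$ lies in a fixed interval of values.'' A single counting fact, which is the same as \Cref{lem:interval} from the single-crossing proof, then handles all three at once. Validity of the output encodings follows from monotonicity of order statistics. Your pairwise characterization is also right: a window of length $b-a$ can neither contain both $a$ and $b$ nor fit strictly between them, and the four remaining cases split exactly as you claim.

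\textbf{The outputs coincide.} In your $j$-th output, for $a<b$, $a$ is above $b$ iff $t_{a,b}\ge j$. Running \Cref{alg:topo_sort_gd} on $G_j$ places $a$ last among the remaining interval $[a,b]$ iff $t_{a,b}<j$, because $t_{a,c}\le t_{a,b}$ for $a<c\le b$ by \Cref{lem:single-peaked-prefer-lemma}. So both constructions peel off the same endpoints, and your argument is really a cleaner analysis of the paper's own output.

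\textbf{What each approach buys.} Your approach:
\begin{itemize}
\item is uniform across the three axioms;
\item shows that the single-peaked result is the single-crossing quantile argument applied coordinatewise, plus the extra step that the coordinatewise picks reassemble into a valid ranking;
\item automatically covers any proportionality requirement expressible as ``$x_\ell$ lies in an interval'';
\item runs in $O(nm\log n+km)$ time.
\end{itemize}
The paper's route stays within the constraint-graph language of \Cref{sec:pairwise-prop} and uses only the pairwise counts $t_{c,c'}$. The cost is bespoke inductive arguments for positional proportionality and PSC.
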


\begin{theorem}
\label{thm:pairwise-prop-single-crossing}
For any single-crossing instance, there exists an output that satisfies pairwise proportionality and PSC simultaneously, such that every output ranking appears in the input.
Moreover, such an output can be computed in polynomial time.
\end{theorem}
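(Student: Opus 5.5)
Since the input profile is single-crossing, I will order the voters $\pi_1,\dots,\pi_n$ along the single-crossing order: for every pair $c,c'$, the voters preferring $c$ to $c'$ form a prefix or a suffix of this order. The output I propose is the \emph{quantile selection}: for $j\in[k]$, output $\pi_{i_j}$ with $i_j=\ceil{j\cdot n/k}$, or a suitably shifted version such as $i_j=\ceil{(j-1)n/k}+1$. Which shift to use will be fixed by the interval-counting argument below; I expect one choice to handle prefixes and suffixes symmetrically up to rounding. Every output ranking is then an input ranking, and the indices are computable in $O(k)$ time after sorting voters into single-crossing order, which can be done in polynomial time by known recognition algorithms for single-crossing profiles.

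For pairwise proportionality, I fix $c,c'$ and assume $c\succ c'$ in at least $t\cdot n/k$ input rankings. These voters form an interval $I$ of consecutive positions, either a prefix or a suffix of the order, with $|I|\ge tn/k$. The key counting step is that any interval of $\lceil tn/k\rceil$ consecutive indices contains at least $t$ of the selected indices $i_j$. This holds because the $i_j$ partition $[n]$ into $k$ blocks $((j-1)n/k,\,jn/k]$, each containing exactly one selected index. For a prefix $[1,r]$ with $r\ge tn/k$, the blocks $1,\dots,t$ lie entirely within it, so it contains $i_1,\dots,i_t$, taking $i_j=\ceil{jn/k}$. For a suffix $[n-r+1,n]$, I use the complementary prefix: it has length at most $n-tn/k=(k-t)n/k$, so it contains at most $k-t$ selected indices under the choice $i_j=\ceil{jn/k}$, since $\ceil{jn/k}\le (k-t)n/k$ forces $j\le k-t$. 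Hence the suffix contains at least $t$ of them. So $i_j=\ceil{jn/k}$ works for both cases, and it is the choice I commit to.

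For PSC, I need that for each $S$, the voters with $\Top{|S|}{\pi_i}=S$ also form an interval, after which the same counting lemma applies with $q_S$ in place of $t$. This is the main obstacle, since the set $\{i:\Top{\ell}{\pi_i}=S\}$ is an intersection of pairwise conditions. It requires $c\succ c'$ for all $c\in S$ and $c'\notin S$. Each such condition defines a prefix or a suffix, so the intersection is an interval, being an intersection of prefixes and suffixes. That interval may sit in the middle rather than at an end, so I need the stronger counting claim: any interval of length at least $tn/k$ contains at least $t$ of the $\ceil{jn/k}$. To prove it, write the interval as $[a,b]$ with $b-a+1\ge tn/k$. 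The selected indices in it are those $j$ with $a\le \ceil{jn/k}\le b$, i.e.\ $(a-1)k/n< j\le bk/n$ via the standard equivalence $\ceil{x}\le b\iff x\le b$ for integer $b$ and $\ceil{x}\ge a\iff x>a-1$. This range of reals has length $(b-a+1)k/n\ge t$. The count of integers in a half-open interval $(x,x+L]$ is at least $\floor{L}\ge t$, and the restriction $j\in[k]$ is automatic because $0\le a-1$ and $b\le n$. This single lemma covers both pairwise proportionality (prefixes and suffixes are special intervals) and PSC, which completes the plan.
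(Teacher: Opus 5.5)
Your proposal is correct and follows the paper's proof essentially step for step. The paper likewise outputs $\pi_{\lceil d n/k\rceil}$ for $d\in[k]$ along the single-crossing order, notes that voters preferring $c$ to $c'$ form a prefix or suffix and that solid coalitions form an interval, and proves the same counting lemma via $\lfloor bk/n\rfloor-\lfloor (a-1)k/n\rfloor\ge\lfloor (b-a+1)k/n\rfloor\ge t$, which is exactly your half-open-interval count. One slip that does not affect the result: with $i_j=\lceil jn/k\rceil$, the blocks $((j-1)n/k,\,jn/k]$ need not each contain exactly one selected index (for $n=4$, $k=3$ the first block $\{1\}$ contains none), but your final interval lemma does not rely on that remark.
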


The proofs of \Cref{thm:single-peaked-simul-prop,thm:pairwise-prop-single-crossing} can be found in Appendix~\ref{app:proof-restricted-preferences}.
While one may hope to add positional proportionality to \Cref{thm:pairwise-prop-single-crossing}, we demonstrate in Appendix~\ref{app:single-crossing-output-from-input} that this is in fact impossible, at least without giving up the output-from-input condition.
Moreover, since the instance used in our proof of \Cref{prop:output-from-input} is single-peaked (see details in Appendix~\ref{app:output-from-input}), adding this condition to \Cref{thm:single-peaked-simul-prop} is also infeasible.

\section{Future directions}
\label{sec:future-directions}

While our paper establishes a solid foundation for understanding proportionality notions in ranking compression, several interesting directions remain to be explored.
In particular, we highlight some notable avenues for future work.

\textbf{Axiom relaxations.} Since pairwise proportionality is not always satisfiable (\Cref{thm:pairwise-prop}), can a relaxation of it nevertheless be guaranteed?
If we relax the notion multiplicatively (i.e., by requiring candidate~$c$ to be ranked higher than candidate~$c'$ in at least $r\cdot t \cdot n/k$ \emph{input} rankings), an approximation of $r = 2$ can be achieved.
Indeed, we can simply let the output consist of an arbitrary ranking $\ceil{k/2}$ times and the reverse of that ranking $\floor{k/2}$ times.
On the other hand, for an additive relaxation (i.e., we require $c$ to be ranked higher than $c'$ in at least $t-r$ \emph{output} rankings), it remains unclear whether an approximation of $r = 1$ is always feasible.

\textbf{Probabilistic guarantees.} Given the adversarial nature of our counterexamples for pairwise proportionality (\Cref{thm:pairwise-prop}), is it true that when the input is drawn from certain probabilistic distributions, a pairwise proportional output exists with high probability?

\textbf{Practical AI alignment.} Zooming out, a major challenge lies in translating our theoretical framework into real-world alignment pipelines.
For example, can we maintain similar proportionality guarantees if users only provide partial rankings, and what structural assumptions would be required for such data? 
Finally, deploying this framework entails mapping the output rankings back to tractable reward functions. To ensure that these rewards can be learned in practice, it may be necessary to draw the output rankings from a restricted class that mathematically corresponds to viable functions, similar to the guarantees that we established in the single-peaked setting (\Cref{thm:single-peaked-simul-prop}).

\section*{Acknowledgments}

This work was partially supported by the Ministry of Education, Singapore, under the Academic Research Fund Tier 1 (FY 2026) grant number 251RES2604, and by an NUS Start-up Grant.
We thank the anonymous reviewers for their valuable comments.

\bibliographystyle{plainnat}
\bibliography{abb,main}

\begin{thebibliography}{48}
\providecommand{\natexlab}[1]{#1}
\providecommand{\url}[1]{\texttt{#1}}
\expandafter\ifx\csname urlstyle\endcsname\relax
  \providecommand{\doi}[1]{doi: #1}\else
  \providecommand{\doi}{doi: \begingroup \urlstyle{rm}\Url}\fi

\bibitem[Ahuja et~al.(1993)Ahuja, Magnanti, and Orlin]{ahuja1993network}
Ravindra~K. Ahuja, Thomas~L. Magnanti, and James~B. Orlin.
\newblock \emph{Network Flows: Theory, Algorithms and Applications}.
\newblock Prentice Hall, 1993.

\bibitem[Anwar et~al.(2024)Anwar, Saparov, Rando, Paleka, Turpin, Hase, Lubana, Jenner, Casper, Sourbut, et~al.]{anwar2024foundational}
Usman Anwar, Abulhair Saparov, Javier Rando, Daniel Paleka, Miles Turpin, Peter Hase, Ekdeep~Singh Lubana, Erik Jenner, Stephen Casper, Oliver Sourbut, et~al.
\newblock Foundational challenges in assuring alignment and safety of large language models.
\newblock \emph{Transactions on Machine Learning Research}, 2024.

\bibitem[Arrow(1950)]{Arrow50}
Kenneth~J. Arrow.
\newblock A difficulty in the concept of social welfare.
\newblock \emph{Journal of Political Economy}, 58\penalty0 (4):\penalty0 328--346, 1950.

\bibitem[Arrow et~al.(2002)Arrow, Sen, and Suzumura]{ArrowSeSu02}
Kenneth~J. Arrow, Amartya~K. Sen, and Kotaro Suzumura, editors.
\newblock \emph{Handbook of Social Choice and Welfare}, volume~1.
\newblock North-Holland, 2002.

\bibitem[Aziz and Lee(2020)]{AzizLe20}
Haris Aziz and Barton~E. Lee.
\newblock The expanding approvals rule: Improving proportional representation and monotonicity.
\newblock \emph{Social Choice and Welfare}, 54\penalty0 (1):\penalty0 1--45, 2020.

\bibitem[Aziz et~al.(2024)Aziz, Lee, Chu, and Vollen]{aziz2024proportionally}
Haris Aziz, Barton~E. Lee, Sean~Morota Chu, and Jeremy Vollen.
\newblock Proportionally representative clustering.
\newblock In \emph{Proceedings of the 20th Conference on Web and Internet Economics (WINE)}, pages 155--171, 2024.

\bibitem[Brandt et~al.(2016)Brandt, Conitzer, Endriss, Lang, and Procaccia]{BrandtCoEn16}
Felix Brandt, Vincent Conitzer, Ulle Endriss, J\'{e}r\^{o}me Lang, and Ariel~D. Procaccia, editors.
\newblock \emph{Handbook of Computational Social Choice}.
\newblock Cambridge University Press, 2016.

\bibitem[Brill and Peters(2023)]{BrillPe23}
Markus Brill and Jannik Peters.
\newblock Robust and verifiable proportionality axioms for multiwinner voting.
\newblock In \emph{Proceedings of the 24th ACM Conference on Economics and Computation (EC)}, page 301, 2023.

\bibitem[Chakraborty et~al.(2024)Chakraborty, Qiu, Yuan, Koppel, Manocha, Huang, Bedi, and Wang]{chakraborty2024maxmin}
Souradip Chakraborty, Jiahao Qiu, Hui Yuan, Alec Koppel, Dinesh Manocha, Furong Huang, Amrit~Singh Bedi, and Mengdi Wang.
\newblock Max{M}in-{RLHF}: Alignment with diverse human preferences.
\newblock In \emph{Proceedings of the 41st International Conference on Machine Learning (ICML)}, pages 6116--6135, 2024.

\bibitem[Chen et~al.(2025)Chen, Chen, Rege, Wang, and Vinayak]{chen2025pal}
Daiwei Chen, Yi~Chen, Aniket Rege, Zhi Wang, and Ramya~Korlakai Vinayak.
\newblock {PAL}: Sample-efficient personalized reward modeling for pluralistic alignment.
\newblock In \emph{Proceedings of the 13th International Conference on Learning Representations (ICLR)}, 2025.

\bibitem[Chen et~al.(2019)Chen, Fain, Lyu, and Munagala]{chen2019proportionally}
Xingyu Chen, Brandon Fain, Liang Lyu, and Kamesh Munagala.
\newblock Proportionally fair clustering.
\newblock In \emph{Proceedings of the 36th International Conference on Machine Learning (ICML)}, pages 1032--1041, 2019.

\bibitem[Chevaleyre et~al.(2009)Chevaleyre, Lang, Maudet, and Ravilly-Abadie]{maudet2009compiling}
Yann Chevaleyre, J\'{e}r\^{o}me Lang, Nicolas Maudet, and Guillaume Ravilly-Abadie.
\newblock Compiling the votes of a subelectorate.
\newblock In \emph{Proceedings of the 21st International Joint Conference on Artificial Intelligence (IJCAI)}, pages 97--102, 2009.

\bibitem[Chevaleyre et~al.(2011)Chevaleyre, Lang, Maudet, and Monnot]{Chevaleyre2}
Yann Chevaleyre, J\'{e}r\^{o}me Lang, Nicolas Maudet, and J\'{e}r\^{o}me Monnot.
\newblock Compilation and communication protocols for voting rules with a dynamic set of candidates.
\newblock In \emph{Proceedings of the 13th Conference on Theoretical Aspects of Rationality and Knowledge (TARK)}, pages 153--160, 2011.

\bibitem[Chhabra et~al.(2021)Chhabra, Masalkovait{\.e}, and Mohapatra]{chhabra2021overview}
Anshuman Chhabra, Karina Masalkovait{\.e}, and Prasant Mohapatra.
\newblock An overview of fairness in clustering.
\newblock \emph{IEEE Access}, 9:\penalty0 130698--130720, 2021.

\bibitem[Cormen et~al.(2022)Cormen, Leiserson, Rivest, and Stein]{CLRS}
Thomas~H. Cormen, Charles~E. Leiserson, Ronald~L. Rivest, and Clifford Stein.
\newblock \emph{Introduction to Algorithms}.
\newblock MIT Press, 4th edition, 2022.

\bibitem[Coste et~al.(2024)Coste, Anwar, Kirk, and Krueger]{coste2024reward}
Thomas Coste, Usman Anwar, Robert Kirk, and David Krueger.
\newblock Reward model ensembles help mitigate overoptimization.
\newblock In \emph{Proceedings of the 12th International Conference on Learning Representations (ICLR)}, 2024.

\bibitem[Dummett(1984)]{Dummett84}
Michael Dummett.
\newblock \emph{Voting Procedures}.
\newblock Oxford University Press, 1984.

\bibitem[Elkind et~al.(2017)Elkind, Lackner, and Peters]{ElkindLaPe17}
Edith Elkind, Martin Lackner, and Dominik Peters.
\newblock Structured preferences.
\newblock In Ulle Endriss, editor, \emph{Trends in Computational Social Choice}, chapter~10, pages 187--207. AI Access, 2017.

\bibitem[Faliszewski et~al.(2023)Faliszewski, Kaczmarczyk, Sornat, Szufa, and W{\k{a}}s]{Faliszewski2023Diversity}
Piotr Faliszewski, Andrzej Kaczmarczyk, Krzysztof Sornat, Stanis{\l}aw Szufa, and Tomasz W{\k{a}}s.
\newblock Diversity, agreement, and polarization in elections.
\newblock In \emph{Proceedings of the 32nd International Joint Conference on Artificial Intelligence (IJCAI)}, pages 2684--2692, 2023.

\bibitem[Faliszewski et~al.(2026)Faliszewski, Sornat, Szufa, and W{\k{a}}s]{faliszewski2026diversity}
Piotr Faliszewski, Krzysztof Sornat, Stanis{\l}aw Szufa, and Tomasz W{\k{a}}s.
\newblock Diversity of structured domains via $k$-{K}emeny scores.
\newblock In \emph{Proceedings of the 40th AAAI Conference on Artificial Intelligence (AAAI)}, pages 16880--16888, 2026.

\bibitem[Ge et~al.(2024)Ge, Halpern, Micha, Procaccia, Shapira, Vorobeychik, and Wu]{ge2024axioms}
Luise Ge, Daniel Halpern, Evi Micha, Ariel~D. Procaccia, Itai Shapira, Yevgeniy Vorobeychik, and Junlin Wu.
\newblock Axioms for {AI} alignment from human feedback.
\newblock In \emph{Proceedings of the 38th Annual Conference on Neural Information Processing Systems (NeurIPS)}, 2024.

\bibitem[Gordon et~al.(2022)Gordon, Lam, Park, Patel, Hancock, Hashimoto, and Bernstein]{gordon2022jury}
Mitchell~L. Gordon, Michelle~S. Lam, Joon~Sung Park, Kayur Patel, Jeffrey~T. Hancock, Tatsunori Hashimoto, and Michael~S. Bernstein.
\newblock Jury learning: Integrating dissenting voices into machine learning models.
\newblock In \emph{Proceedings of the 2022 CHI Conference on Human Factors in Computing Systems (CHI)}, pages 115:1--115:19, 2022.

\bibitem[Halpern et~al.(2025)Halpern, Micha, Procaccia, and Shapira]{HalpernMiPr25}
Daniel Halpern, Evi Micha, Ariel~D. Procaccia, and Itai Shapira.
\newblock Pairwise calibrated rewards for pluralistic alignment.
\newblock In \emph{Proceedings of the 39th Annual Conference on Neural Information Processing Systems (NeurIPS)}, 2025.

\bibitem[Hopcroft and Karp(1973)]{hopcroft1973n}
John~E. Hopcroft and Richard~M. Karp.
\newblock An $n^{5/2}$ algorithm for maximum matchings in bipartite graphs.
\newblock \emph{SIAM Journal on Computing}, 2\penalty0 (4):\penalty0 225--231, 1973.

\bibitem[Kellerhals and Peters(2024)]{kellerhals2024proportional}
Leon Kellerhals and Jannik Peters.
\newblock Proportional fairness in clustering: A social choice perspective.
\newblock In \emph{Proceedings of the 38th Annual Conference on Neural Information Processing Systems (NeurIPS)}, 2024.

\bibitem[Khalifa et~al.(2021)Khalifa, Elsahar, and Dymetman]{khalifa2020distributional}
Muhammad Khalifa, Hady Elsahar, and Marc Dymetman.
\newblock A distributional approach to controlled text generation.
\newblock In \emph{Proceedings of the 9th International Conference on Learning Representations (ICLR)}, 2021.

\bibitem[Kirk et~al.(2024{\natexlab{a}})Kirk, Whitefield, R{\"o}ttger, Bean, Margatina, Ciro, Mosquera, Bartolo, Williams, He, et~al.]{kirk2024prism}
Hannah~Rose Kirk, Alexander Whitefield, Paul R{\"o}ttger, Andrew Bean, Katerina Margatina, Juan Ciro, Rafael Mosquera, Max Bartolo, Adina Williams, He~He, et~al.
\newblock The {PRISM} alignment dataset: What participatory, representative and individualised human feedback reveals about the subjective and multicultural alignment of large language models.
\newblock In \emph{Proceedings of the 38th Annual Conference on Neural Information Processing Systems (NeurIPS)}, 2024{\natexlab{a}}.

\bibitem[Kirk et~al.(2024{\natexlab{b}})Kirk, Mediratta, Nalmpantis, Luketina, Hambro, Grefenstette, and Raileanu]{kirk2024understanding}
Robert Kirk, Ishita Mediratta, Christoforos Nalmpantis, Jelena Luketina, Eric Hambro, Edward Grefenstette, and Roberta Raileanu.
\newblock Understanding the effects of {RLHF} on {LLM} generalisation and diversity.
\newblock In \emph{Proceedings of the 12th International Conference on Learning Representations (ICLR)}, 2024{\natexlab{b}}.

\bibitem[Lake et~al.(2025)Lake, Choi, and Durrett]{lake2025from}
Thom Lake, Eunsol Choi, and Greg Durrett.
\newblock From distributional to {O}verton pluralism: Investigating large language model alignment.
\newblock In \emph{Proceedings of the 2025 Conference of the Nations of the Americas Chapter of the Association for Computational Linguistics: Human Language Technologies (NAACL)}, pages 6794--6814, 2025.

\bibitem[Lederer(2026)]{Lederer26}
Patrick Lederer.
\newblock Proportional representation in rank aggregation.
\newblock In \emph{Proceedings of the 27th ACM Conference on Economics and Computation (EC)}, 2026.

\bibitem[Lederer et~al.(2024)Lederer, Peters, and W{\k{a}}s]{LedererPeWa24}
Patrick Lederer, Dominik Peters, and Tomasz W{\k{a}}s.
\newblock The squared {K}emeny rule for averaging rankings.
\newblock In \emph{Proceedings of the 25th ACM Conference on Economics and Computation (EC)}, page 755, 2024.

\bibitem[Mandal et~al.(2019)Mandal, Procaccia, Shah, and Woodruff]{mandal2019efficient}
Debmalya Mandal, Ariel~D. Procaccia, Nisarg Shah, and David~P. Woodruff.
\newblock Efficient and thrifty voting by any means necessary.
\newblock In \emph{Proceedings of the 33rd Annual Conference on Neural Information Processing Systems (NeurIPS)}, pages 7178--7189, 2019.

\bibitem[Mandal et~al.(2020)Mandal, Shah, and Woodruff]{mandal2020optimal}
Debmalya Mandal, Nisarg Shah, and David~P. Woodruff.
\newblock Optimal communication-distortion tradeoff in voting.
\newblock In \emph{Proceedings of the 21st ACM Conference on Economics and Computation (EC)}, pages 795--813, 2020.

\bibitem[Micha and Shah(2020)]{micha2020proportionally}
Evi Micha and Nisarg Shah.
\newblock Proportionally fair clustering revisited.
\newblock In \emph{Proceedings of the 47th International Colloquium on Automata, Languages, and Programming (ICALP)}, pages 85:1--85:16, 2020.

\bibitem[Ouyang et~al.(2022)Ouyang, Wu, Jiang, Almeida, Wainwright, Mishkin, Zhang, Agarwal, Slama, Ray, et~al.]{ouyang2022training}
Long Ouyang, Jeffrey Wu, Xu~Jiang, Diogo Almeida, Carroll~L. Wainwright, Pamela Mishkin, Chong Zhang, Sandhini Agarwal, Katarina Slama, Alex Ray, et~al.
\newblock Training language models to follow instructions with human feedback.
\newblock In \emph{Proceedings of the 36th Annual Conference on Neural Information Processing Systems (NeurIPS)}, 2022.

\bibitem[Perez et~al.(2022)Perez, Huang, Song, Cai, Ring, Aslanides, Glaese, McAleese, and Irving]{perez2022red}
Ethan Perez, Saffron Huang, Francis Song, Trevor Cai, Roman Ring, John Aslanides, Amelia Glaese, Nat McAleese, and Geoffrey Irving.
\newblock Red teaming language models with language models.
\newblock In \emph{Proceedings of the 2022 Conference on Empirical Methods in Natural Language Processing (EMNLP)}, pages 3419--3448, 2022.

\bibitem[Poddar et~al.(2024)Poddar, Wan, Ivison, Gupta, and Jaques]{poddar2024personalizing}
Sriyash Poddar, Yanming Wan, Hamish Ivison, Abhishek Gupta, and Natasha Jaques.
\newblock Personalizing reinforcement learning from human feedback with variational preference learning.
\newblock In \emph{Proceedings of the 38th Annual Conference on Neural Information Processing Systems (NeurIPS)}, 2024.

\bibitem[Prabhakaran et~al.(2022)Prabhakaran, Qadri, and Hutchinson]{prabhakaran2022cultural}
Vinodkumar Prabhakaran, Rida Qadri, and Ben Hutchinson.
\newblock Cultural incongruencies in artificial intelligence.
\newblock \emph{arXiv preprint arXiv:2211.13069}, 2022.

\bibitem[Procaccia et~al.(2025)Procaccia, Schiffer, Wang, and Zhang]{ProcacciaScWa25}
Ariel~D. Procaccia, Benjamin Schiffer, Serena Wang, and Shirley Zhang.
\newblock Metritocracy: Representative metrics for lite benchmarks.
\newblock In \emph{Proceedings of the 39th Annual Conference on Neural Information Processing Systems (NeurIPS)}, 2025.

\bibitem[Santurkar et~al.(2023)Santurkar, Durmus, Ladhak, Lee, Liang, and Hashimoto]{santurkar2023whose}
Shibani Santurkar, Esin Durmus, Faisal Ladhak, Cinoo Lee, Percy Liang, and Tatsunori Hashimoto.
\newblock Whose opinions do language models reflect?
\newblock In \emph{Proceedings of the 40th International Conference on Machine Learning (ICML)}, pages 29971--30004, 2023.

\bibitem[Shypula et~al.(2025)Shypula, Li, Zhang, Padmakumar, Yin, and Bastani]{shypula2025evaluating}
Alexander Shypula, Shuo Li, Botong Zhang, Vishakh Padmakumar, Kayo Yin, and Osbert Bastani.
\newblock Evaluating the diversity and quality of {LLM} generated content.
\newblock In \emph{Proceedings of the 1st Conference on Language Modeling (COLM)}, 2025.

\bibitem[Skowron et~al.(2017)Skowron, Lackner, Brill, Peters, and Elkind]{SkowronLaBr16}
Piotr Skowron, Martin Lackner, Markus Brill, Dominik Peters, and Edith Elkind.
\newblock Proportional rankings.
\newblock In \emph{Proceedings of the 26th International Joint Conference on Artificial Intelligence (IJCAI)}, pages 409--415, 2017.

\bibitem[Sorensen et~al.(2024)Sorensen, Moore, Fisher, Gordon, Mireshghallah, Rytting, Ye, Jiang, Lu, Dziri, Althoff, and Choi]{roadmap}
Taylor Sorensen, Jared Moore, Jillian Fisher, Mitchell~L. Gordon, Niloofar Mireshghallah, Christopher~Michael Rytting, Andre Ye, Liwei Jiang, Ximing Lu, Nouha Dziri, Tim Althoff, and Yejin Choi.
\newblock Position: A roadmap to pluralistic alignment.
\newblock In \emph{Proceedings of the 41st International Conference on Machine Learning (ICML)}, pages 46280--46302, 2024.

\bibitem[Viappiani(2015)]{viappiani2015characterization}
Paolo Viappiani.
\newblock Characterization of scoring rules with distances: Application to the clustering of rankings.
\newblock In \emph{Proceedings of the 24th International Joint Conference on Artificial Intelligence (IJCAI)}, pages 104--110, 2015.

\bibitem[Xia and Conitzer(2010)]{xia2010compilation}
Lirong Xia and Vincent Conitzer.
\newblock Compilation complexity of common voting rules.
\newblock In \emph{Proceedings of the 24th AAAI Conference on Artificial Intelligence (AAAI)}, pages 915--920, 2010.

\bibitem[Xiao et~al.(2025)Xiao, Li, Xie, Getzen, Fang, Long, and Su]{xiao2024algorithmic}
Jiancong Xiao, Ziniu Li, Xingyu Xie, Emily Getzen, Cong Fang, Qi~Long, and Weijie~J. Su.
\newblock On the algorithmic bias of aligning large language models with {RLHF}: Preference collapse and matching regularization.
\newblock \emph{Journal of the American Statistical Association}, 120\penalty0 (552):\penalty0 2154--2164, 2025.

\bibitem[Yuan et~al.(2024)Yuan, Di, Zhao, and Naseem]{yuan2024cultural}
Jiahao Yuan, Zixiang Di, Shangzixin Zhao, and Usman Naseem.
\newblock Cultural palette: Pluralising culture alignment via multi-agent palette.
\newblock \emph{arXiv preprint arXiv:2412.11167}, 2024.

\bibitem[Zhang et~al.(2026)Zhang, Milli, Jusko, Smith, Amos, Bouaziz, Revel, Kussman, Sheynin, Titus, et~al.]{zhang2025cultivating}
Lily~Hong Zhang, Smitha Milli, Karen Jusko, Jonathan Smith, Brandon Amos, Wassim Bouaziz, Manon Revel, Jack Kussman, Yasha Sheynin, Lisa Titus, et~al.
\newblock Cultivating pluralism in algorithmic monoculture: The community alignment dataset.
\newblock In \emph{Proceedings of the 14th International Conference on Learning Representations (ICLR)}, 2026.

\end{thebibliography}

\appendix

\section{Proof of \Cref{prop:output-from-input}}
\label{app:output-from-input}

Consider the instance below with $(n, m, k) = (4, 6, 2)$.
\begin{align*}
\pi_1 &= (3, 4, 2, 5, 1, 6); \\
\pi_2 &= (3, 4, 5, 2, 6, 1); \\
\pi_3 &= (4, 3, 2, 5, 6, 1); \\
\pi_4 &= (4, 3, 5, 2, 1, 6).
\end{align*}

Note that this instance is single-peaked\footnote{See the definition in Appendix~\ref{app:proof-restricted-preferences}.} with respect to the ordering $(1, 2, \dots, 6)$.

We first check that positional proportionality cannot be achieved by choosing from the input. 
For each pair of input rankings, there exist $c\in C$ and $\ell\in[m]$ such that candidate $c$ appears in the top $\ell$ ranks \emph{only} within that pair. 
\begin{itemize}
  \item $\pi_1,\pi_2$: candidate $3$ appears in the top $1$ rank.
  \item $\pi_3,\pi_4$: candidate $4$ appears in the top $1$ rank.
  \item $\pi_1,\pi_3$: candidate $2$ appears in the top $3$ ranks.
  \item $\pi_2,\pi_4$: candidate $5$ appears in the top $3$ ranks.
  \item $\pi_1,\pi_4$: candidate $1$ appears in the top $5$ ranks.
  \item $\pi_2,\pi_3$: candidate $6$ appears in the top $5$ ranks.
\end{itemize}
Each pair imposes a positional constraint on the output that requires selecting at least one of those two rankings. Because no output of size two can intersect all six pairs, positional proportionality cannot be achieved.

In a similar vein, we can show that pairwise proportionality cannot be achieved by choosing from the input rankings. Each pair of rankings is uniquely responsible for a pairwise constraint.
\begin{itemize}
  \item $\pi_1,\pi_2$: candidate $3$ is ranked higher than candidate $4$.
  \item $\pi_3,\pi_4$: candidate $4$ is ranked higher than candidate $3$.
  \item $\pi_1,\pi_3$: candidate $2$ is ranked higher than candidate $5$.
  \item $\pi_2,\pi_4$: candidate $5$ is ranked higher than candidate $2$.
  \item $\pi_1,\pi_4$: candidate $1$ is ranked higher than candidate $6$.
  \item $\pi_2,\pi_3$: candidate $6$ is ranked higher than candidate $1$.
\end{itemize}

Finally, we address PSC.
For each pair of input rankings, we identify a solid coalition that is unique to that pair.
\begin{itemize}
  \item $\pi_1,\pi_2$: $\{ 3 \}$.
  \item $\pi_3,\pi_4$: $\{ 4 \}$.
  \item $\pi_1,\pi_3$: $\{ 2, 3, 4 \}$.
  \item $\pi_2,\pi_4$: $\{ 3, 4, 5 \}$.
  \item $\pi_1,\pi_4$: $\{ 1, 2, 3, 4, 5 \}$.
  \item $\pi_2,\pi_3$: $\{ 2, 3, 4, 5, 6 \}$.
\end{itemize}

\section{Proof of \Cref{lem:table-permutation}}
\label{app:omitted-positional-prop}

We proceed by induction on $b$. 
For the base case $b = 1$, the statement holds trivially. 
For the inductive step, assume that for some $b\ge 2$, the statement holds for $b - 1$; we will show that it also holds for~$b$.
Consider any $a\times b$ table as in the lemma statement.
It suffices to show that we can permute each row of the table so that the first column forms a permutation of $[a]$. 
Indeed, this is because in the remaining $a \times (b - 1)$ table that excludes the first column, every integer $c \in [a]$ appears exactly $b - 1$ times, so by invoking the induction hypothesis, we can permute each row of this table so that each column forms a permutation of $[a]$.

To prove our claim, we use Hall's marriage theorem. Construct a bipartite graph $G = (V_L \cup V_R, E)$ where $V_L = \{u_L \mid u \in [a]\}$ represents the integers, $V_R = \{v_R \mid v \in [a]\}$ represents the rows, and an edge $(u_L, v_R) \in E$ exists if and only if row $v$ contains the integer $u$. 
We will show that $G$ has a perfect matching. This perfect matching pairs each integer $c \in [a]$ with a unique row containing it. Consequently, we can permute the entries of each row so that every integer $c$ becomes the first entry of its matched row.

To show that a perfect matching exists, it suffices to check that the marriage condition holds. 
That is, for each set $S \subseteq [a]$ of integers, we need to show that there are at least $|S|$ rows that contain at least one integer from~$S$. 
Since the integers in~$S$ appear exactly $b|S|$ times in the table in total, and each row only contains $b$ integers, these $b|S|$ appearances occur in at least $|S|$ rows.
Hence, the marriage condition is satisfied, as required.

\section{Proof of \Cref{lem:flow-integrality}}
\label{app:omitted-PSC}

We reduce the problem of vertex lower bounds to a standard maximum flow problem with edge capacities. Recall that in a standard flow network with integral capacities, if a valid flow of value $F$ exists, algorithms such as Ford--Fulkerson can find an integral maximum flow in $O(|E| \cdot F)$ time~\citep{CLRS}. The reduction involves two standard techniques: using ``vertex splitting'' to go from vertex constraints to edge constraints, and using a ``super-source'' and ``super-sink'' to convert from lower bounds to upper bounds~\cite{ahuja1993network}.

\paragraph{Auxiliary graph construction.} Let $G = (V, E)$ be the original graph with source $s$, sink $t$, and integral vertex lower bound function $L$. 
Let $f$ be the given valid flow satisfying these constraints. 

We construct an auxiliary directed graph $G' = (V', E')$. For every vertex $v \in V$, we introduce two vertices: an ``in'' vertex $v_{\mathrm{in}}$ and an ``out'' vertex $v_{\mathrm{out}}$. We also add a new source $s^*$ and a new sink~$t^*$. Thus, $V' = \bigcup_{v \in V} \{v_{\mathrm{in}}, v_{\mathrm{out}}\} \cup \{s^*, t^*\}$. 
The edge set $E'$ and the upper capacity function $c$ are defined as follows:
\begin{itemize}[leftmargin=*,nosep]
    \item \emph{Original edges:} For each $(u, v) \in E$, add an edge $(u_{\mathrm{out}}, v_{\mathrm{in}})$ with capacity $c(u_{\mathrm{out}}, v_{\mathrm{in}}) = \infty$.
    \item \emph{Internal edges:} For each $v \in V$, add an edge $(v_{\mathrm{in}}, v_{\mathrm{out}})$ with capacity $c(v_{\mathrm{in}}, v_{\mathrm{out}}) = \infty$.
    \item \emph{Lower bound demands:} For each $v \in V$, add an edge $(s^*, v_{\mathrm{out}})$ with capacity $c(s^*, v_{\mathrm{out}}) = L(v)$, and an edge $(v_{\mathrm{in}}, t^*)$ with capacity $c(v_{\mathrm{in}}, t^*) = L(v)$.
    \item \emph{Circulation edge:} Add a return edge $(t_{\mathrm{out}}, s_{\mathrm{in}})$ from the original sink to the original source, with capacity $c(t_{\mathrm{out}}, s_{\mathrm{in}}) = \ceil{\mathrm{val}(f)}$.
\end{itemize}

\paragraph{Fractional feasibility.} The maximum flow from $s^*$ to $t^*$ is upper-bounded by $\sum_{v \in V} L(v)$, which is the capacity of the cut $(\{s^*\}, V' \setminus \{s^*\})$. 
We show that this maximum is achievable by constructing from $f$ a valid fractional flow $f'$ on $G'$:
\begin{align*}
    f'(u_{\mathrm{out}}, v_{\mathrm{in}}) &= f(u, v) \quad &&\text{for all } (u, v) \in E; \\
    f'(v_{\mathrm{in}}, v_{\mathrm{out}}) &= f(v) - L(v) \quad &&\text{for all } v \in V; \\
    f'(s^*, v_{\mathrm{out}}) &= L(v) \quad &&\text{for all } v \in V; \\
    f'(v_{\mathrm{in}}, t^*) &= L(v) \quad &&\text{for all } v \in V; \\
    f'(t_{\mathrm{out}}, s_{\mathrm{in}}) &= \mathrm{val}(f).
\end{align*}
We have $f'(v_{\mathrm{in}}, v_{\mathrm{out}}) \ge 0$ because the original flow satisfies the vertex lower bounds $f(v) \ge L(v)$. 
It is straightforward to verify that all upper capacity constraints are satisfied: the original and internal edges have infinite capacity, the demand edges connected to $s^*$ and $t^*$ exactly match their capacities, and the flow on the circulation edge satisfies $\mathrm{val}(f) \le \ceil{\mathrm{val}(f)}$. 
Since the edges leaving $s^*$ and entering $t^*$ are saturated, the net flow out of $s^*$ is exactly $\sum_{v \in V} L(v)$. 

We verify flow conservation at the internal vertices of $G'$ in two cases:
\begin{itemize}[leftmargin=*]
    \item \textbf{Vertex $v_{\mathrm{in}}$ for $v \in V$:} 
    If $v \neq s$, the incoming flow comes from the edges $(u_{\mathrm{out}}, v_{\mathrm{in}})$ for $u \in \mathrm{In}(v)$, summing to $f_{\mathrm{in}}(v) = f(v)$. 
    If $v = s$, the incoming flow comes solely from the circulation edge $(t_{\mathrm{out}}, s_{\mathrm{in}})$, which carries $\mathrm{val}(f) = f(s)$. Thus, the incoming flow is always $f(v)$. The outgoing flow splits between $(v_{\mathrm{in}}, t^*)$ carrying $L(v)$ and $(v_{\mathrm{in}}, v_{\mathrm{out}})$ carrying $f(v) - L(v)$, which sums to $f(v)$. 
    Therefore, conservation holds.
    
    \item \textbf{Vertex $v_{\mathrm{out}}$ for $v \in V$:} 
    The incoming flow comes from $s^*$ carrying $L(v)$ and from $v_{\mathrm{in}}$ carrying $f(v) - L(v)$, summing to $f(v)$. For the outgoing flow, if $v \neq t$, it is routed to $(v_{\mathrm{out}}, u_{\mathrm{in}})$ for $u \in \mathrm{Out}(v)$, summing to $f_{\mathrm{out}}(v) = f(v)$. If $v = t$, it is routed solely to the circulation edge $(t_{\mathrm{out}}, s_{\mathrm{in}})$, which carries $\mathrm{val}(f) = f(t)$. 
    Thus, the outgoing flow is always $f(v)$. 
    Therefore, conservation holds.
\end{itemize}

\paragraph{Integrality and verification.} Since a flow of value $\sum_{v \in V} L(v)$ is achievable, and this matches the capacity of the minimum cut $(\{s^*\}, V' \setminus \{s^*\})$, it is a maximum flow. Because all capacities in~$G'$ are integral, the Ford--Fulkerson algorithm is guaranteed to find an integral flow $g^*$ on $G'$ with this same maximum value. 

Since $g^*$ is a maximum flow, it must saturate all edges leaving $s^*$ and entering $t^*$. Thus, $g^*(s^*, v_{\mathrm{out}}) = g^*(v_{\mathrm{in}}, t^*) = L(v)$ for all $v \in V$. 
We translate $g^*$ back to an integral flow $f^*$ on the original graph~$G$ by setting $f^*(u, v) = g^*(u_{\mathrm{out}}, v_{\mathrm{in}})$ for all $(u, v) \in E$. 
We can check that $f^*$ satisfies all required properties:
\begin{enumerate}[label=(\arabic*), leftmargin=*]
    \item \textbf{Flow conservation:} For any $v \in V \setminus \{s, t\}$, the incoming flow is $f^*_{\mathrm{in}}(v) = \sum_{u \in \mathrm{In}(v)} g^*(u_{\mathrm{out}}, v_{\mathrm{in}})$. 
    By conservation at $v_{\mathrm{in}}$ in $G'$, this equals $g^*(v_{\mathrm{in}}, v_{\mathrm{out}}) + g^*(v_{\mathrm{in}}, t^*) = g^*(v_{\mathrm{in}}, v_{\mathrm{out}}) + L(v)$. 
    Similarly, $f^*_{\mathrm{out}}(v) = \sum_{w \in \mathrm{Out}(v)} g^*(v_{\mathrm{out}}, w_{\mathrm{in}})$, which by conservation at $v_{\mathrm{out}}$ in $G'$ equals $g^*(v_{\mathrm{in}}, v_{\mathrm{out}}) + g^*(s^*, v_{\mathrm{out}}) = g^*(v_{\mathrm{in}}, v_{\mathrm{out}}) + L(v)$. 
    Thus, $f^*_{\mathrm{in}}(v) = f^*_{\mathrm{out}}(v)$.
    \item \textbf{Vertex lower bounds:} For any intermediate vertex $v \in V \setminus \{s, t\}$, the flow through $v$ is $f^*(v) = f^*_{\mathrm{in}}(v) = g^*(v_{\mathrm{in}}, v_{\mathrm{out}}) + L(v)$. 
    Since flow values are nonnegative, $f^*(v) \ge L(v)$. 
    For the source~$s$, by a similar reasoning as in the previous paragraph, we have $f^*(s) = f^*_{\mathrm{out}}(s) = g^*(s_{\mathrm{in}}, s_{\mathrm{out}}) + L(s) \ge L(s)$. 
    Similarly, for the sink $t$, we have $f^*(t) = f^*_{\mathrm{in}}(t) = g^*(t_{\mathrm{in}}, t_{\mathrm{out}}) + L(t) \ge L(t)$. 
    Therefore, all demand constraints are satisfied.
    \item \textbf{Value constraint:} We claim that the net value of $f^*$ from $s$ to $t$ is exactly the flow on the circulation edge $(t_{\mathrm{out}}, s_{\mathrm{in}})$ in $G'$. 
    To see this, note that $f^*_{\mathrm{out}}(s) = \sum_{w \in N_{\mathrm{out}}(s)} g^*(s_{\mathrm{out}}, w_{\mathrm{in}})$. 
    By flow conservation at $s_{\mathrm{out}}$ in $G'$, this equals $g^*(s_{\mathrm{in}}, s_{\mathrm{out}}) + g^*(s^*, s_{\mathrm{out}}) = g^*(s_{\mathrm{in}}, s_{\mathrm{out}}) + L(s)$. 
    Similarly, by conservation at $s_{\mathrm{in}}$, the incoming flow $g^*(t_{\mathrm{out}}, s_{\mathrm{in}})$ must equal the outgoing flow $g^*(s_{\mathrm{in}}, s_{\mathrm{out}}) + g^*(s_{\mathrm{in}}, t^*) = g^*(s_{\mathrm{in}}, s_{\mathrm{out}}) + L(s)$. 
    Therefore, $\mathrm{val}(f^*) = f^*_{\mathrm{out}}(s) = g^*(t_{\mathrm{out}}, s_{\mathrm{in}})$. 
    Because this edge has a capacity of $\ceil{\mathrm{val}(f)}$, we have $\mathrm{val}(f^*) \le \ceil{\mathrm{val}(f)}$.
\end{enumerate}

Finally, we analyze the time complexity.
The auxiliary graph $G'$ has $2|V| + 2$ vertices and $|E| + 3|V| + 1$ edges. Running the Ford--Fulkerson algorithm on $G'$ to find $g^*$ takes $O(|E'| \cdot \mathrm{val}(g^*)) = O((|E| + |V|) \sum_{v \in V} L(v))$ time. Translating $g^*$ back into $f^*$ takes $O(|E|)$ time.
This gives the claimed overall time complexity.

\section{Proof of \Cref{thm:pairwise-prop}}
\label{app:proof-pairwise-prop}

We begin by setting up the necessary preliminaries for the general scenario of arbitrary $k \geq 2$ and $n \geq k + 2$.
We first generalize our graph formulation of the problem from the case $k = 2$. 
For each pair of candidates $c, c' \in C$, let $t_{c, c'}$ be the maximum integer such that $c$ is ranked higher than $c'$ in at least $t_{c, c'}\cdot n / k$ input rankings. 
We write $c \nto{t_{c, c'}} c'$ to mean that there is a constraint that $c$ must be ranked higher than $c'$ in at least $t_{c, c'}$ of the $k$ output rankings. 
If $t_{c, c'} = k$, we call such a constraint a \textit{strong constraint}.

We model the constraints as a weighted directed graph whose vertex set is $C = [m]$. 
For each pair $c, c' \in C$, we add an edge $c \to c'$ with weight $t_{c, c'}$.
We will refer to the edges in the graph and the constraints they are induced by interchangeably. 
We next define a family of graphs, which generalize the graph that we used for the case $(n, k) = (4, 2)$.

\begin{definition}
For any integers $r, s \ge 2$, and any $b \in [k - 1]$ and $d \in [s - 1]$, the graph $G(r, s, b, d)$ is defined as a weighted directed graph with vertex set $C = [2rs]$. 
This graph is composed of $s$ cycles $D_j$ for $j \in [s - 1]_0$, where $D_j$ consists of the vertices $u \in \{2rj + 1, 2rj+2, \dots, 2rj + 2r\}$ and, for each $j' \in [r]$, the edges 
\begin{align*}
2rj + 2j' - 1 \nto{b} 2rj + 2j' \quad \text{ and } \quad 2rj + 2j' \nto{k} 2rj + ((2j' + 1) \bmod{2r}).
\end{align*}
For each $j \in [s - 1]_0$, let $E_j$ and $O_j$ be the set of even-numbered and odd-numbered vertices in $D_j$, respectively. 
Taking indices modulo $s$, for each $j \in [s - 1]_0$ and $d' \in [d]$, the graph also has the edge $u \nto{k} v$ for all $u \in E_j$ and $v \in O_{j + d'}$.
\end{definition}

With this notation, the graph that we used earlier for the case $(n, k) = (4, 2)$ is $G(3, 4, 1, 1)$.

Given $G(r, s, b, d)$, we say that a ranking $\pi$ (of the candidates in~$C$) \textit{touches} a cycle $D_j$ if it satisfies a constraint of the type $u \nto{b} v$ in $D_j$. We will prove a useful lemma regarding these graphs.

\begin{lemma}
\label{lem:pairwise_prop_general}
For any integers $r, s \ge 2$, and any $b \in [k - 1]$ and $d \in [s - 1]$, any ranking $\pi$ that satisfies all strong constraints in $G(r, s, b, d)$ touches $D_j$ for at most $s-d$ indices $j \in [s-1]_0$.
\end{lemma}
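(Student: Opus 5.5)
We argue by contradiction, following the $(n,k)=(4,2)$ argument. Suppose $\pi$ satisfies all strong constraints of $G(r,s,b,d)$ and touches $D_j$ for every $j$ in a set $T\subseteq[s-1]_0$ with $|T|\ge s-d+1$. For each $j\in T$, fix a touched weak edge $u_j\nto{b}v_j$ in $D_j$. By construction $u_j\in O_j$, $v_j\in E_j$, and $\pi$ ranks $u_j$ above $v_j$. The goal is to chain these pairs into a cycle in which $\pi$ satisfies every edge. Such a cycle is impossible, since $\pi$ is a linear order.

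The key combinatorial step is about gaps. List the elements of $T$ cyclically modulo $s$ as $j_1<j_2<\dots<j_p$ with $p=|T|\ge s-d+1$. The $p$ cyclic gaps $j_{i+1}-j_i \pmod s$ are positive and sum to $s$. If some gap were at least $d+1$, the remaining $p-1$ gaps, each at least $1$, would give a total of at least $d+1+(p-1)\ge d+1+s-d=s+1>s$, which is a contradiction. So every gap $d'=j_{i+1}-j_i$ lies in $[d]$. (If $p=1$, the single gap is $s$, but then $s-d+1\le1$ forces $d\ge s$, contradicting $d\le s-1$. Hence $p\ge2$ and all gaps lie in $[1,s-1]$.) By the definition of $G(r,s,b,d)$, there is therefore a strong edge $v_{j_i}\nto{k}u_{j_{i+1}}$ from $E_{j_i}$ to $O_{j_i+d'}$ for every $i$, with indices of $j$ taken cyclically.

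Since $\pi$ satisfies all strong constraints, it ranks $v_{j_i}$ above $u_{j_{i+1}}$. Combined with $u_{j_i}$ being above $v_{j_i}$, this shows that $\pi$ satisfies every edge of the closed walk
\[u_{j_1}\to v_{j_1}\to u_{j_2}\to v_{j_2}\to\cdots\to u_{j_p}\to v_{j_p}\to u_{j_1}.\]
Transitivity then gives that $u_{j_1}$ is ranked strictly above itself, a contradiction. The vertices involved are distinct because they lie in distinct cycles $D_j$ and $u_j\ne v_j$, although distinctness is not even needed for the contradiction.

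The main obstacle is the gap-counting step: checking that $|T|\ge s-d+1$ forces every consecutive cyclic gap to be at most $d$, and treating the wrap-around edge and the degenerate case $p=1$ correctly. Everything else is a direct transcription of the $G(3,4,1,1)$ argument.
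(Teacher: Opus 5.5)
Your proof is correct and follows the paper's argument. You pick a satisfied weak edge $u_j \nto{b} v_j$ in each touched cycle and show by gap-counting that consecutive touched indices differ cyclically by some $d' \in [d]$; this produces a closed chain of constraints all satisfied by $\pi$, which is impossible. Your sum-of-gaps formulation is just a more explicit version of the paper's observation that the cyclic distance from $j_{q+1}$ back to $j_q$ lies in $[s-d, s-1]$.
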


\begin{proof}
Suppose for contradiction that $\pi$ touches (at least) $s - d + 1$ of the cycles $D_j$. 
Let these cycles be $D_{j_0}, \dots, D_{j_{s - d}}$, where $j_0 < \dots < j_{s - d}$.
For each $q \in [s - d]_0$, let $u_{q} \nto{b} v_{q}$ be a constraint in~$D_{j_{q}}$ that $\pi$ satisfies; in particular, $u_q$ is odd and $v_q$ is even. 
Taking indices modulo $s - d + 1$, we will show that for every $q \in [s - d]_0$, there exists $d' \in [d]$ such that $j_{q + 1} \equiv j_q + d' \pmod s$.
Since $j_0 < \dots < j_{s - d}$, there exists $d'' \in [s-d, s-1]$ such that $j_q \equiv j_{q + 1} + d'' \pmod s$. 
Therefore, $j_{q + 1} \equiv j_q + s - d'' \pmod s$, and we can take $d' = s - d''$.

By definition of $G(r, s, b, d)$, this implies that we have a cycle
\[ u_0 \nto{b} v_0 \nto{k} u_1 \nto{b} v_1 \nto{k} \cdots \nto{k} u_{s - d} \nto{b} v_{s - d} \nto{k} u_0 \]
in which every constraint is satisfied by $\pi$, which is a contradiction.
\end{proof}

We now proceed to prove \Cref{thm:pairwise-prop} in its full generality.

\begin{proof}[Proof of \Cref{thm:pairwise-prop}]
Fix $k \ge 2$ and $n \ge k + 2$.
If $\gcd(k, n) \neq 1$, let $k' = \frac{k}{\gcd(k, n)} \in [k-1]$, so $k'n\equiv 0\pmod k$.
Else, $\gcd(k, n) = 1$; in this case, let $k'$ be the inverse of $-n$ modulo~$k$, so $k'\in[k-1]$ and $k'n\equiv -1\pmod k$.
In either case, let $n' = \ceil{\frac{k'n}{k}}$, which is the number of input rankings in which a candidate $c$ must be ranked higher than another candidate $c'$ in order to have a constraint $c \nto{k'} c'$. 
We have $n' \leq \frac{k'n}{k} + \frac{1}{k}$, which means that $kn' \leq k'n + 1$.
Hence, $kn' + k \leq k'n + k + 1 < k'n + n$, so $k' + 1 > \frac{k(n' + 1)}{n}$ and $\floor{\frac{k(n' + 1)}{n}} \leq k'$.
Since $k' < k$, this implies that $n' + 1 < n$, and thus $n - n' - 1 \in [n - 1]$. 
Therefore, we can consider the graph $G(n' + 1, n, k', n - n' - 1)$. 
This graph has $m = 2n(n' + 1)$ vertices, with each vertex corresponding to a candidate.

We first show that the graph $G(n' + 1, n, k', n - n' - 1)$ cannot be satisfied by any $k$ output rankings. 
For the sake of contradiction, assume otherwise.
By \Cref{lem:pairwise_prop_general}, each output ranking can touch at most $n - (n - n' - 1) = n' + 1$ cycles in the graph. 
By the pigeonhole principle, there is a cycle touched by at most $\floor{\frac{k(n' + 1)}{n}} \le k'$ rankings.
Let $D_j$ be a cycle touched by at most $k'$ output rankings.
If it is touched by at least one output ranking, consider one of these rankings $\pi$. 
The ranking~$\pi$ cannot satisfy every constraint in the cycle simultaneously, so there is a constraint $u \nto{k'} v$ in the cycle which is not satisfied by~$\pi$. 
Note that only rankings that touch $D_j$ can satisfy this constraint, so at most $k' - 1$ output rankings satisfy the constraint.
Else, if $D_j$ is not touched by any output ranking, then since $k' - 1 \ge 0$, any constraint $u \nto{k'} v$ in $D_j$ is again satisfied by at most $k' - 1$ rankings.
In either case, there exists a constraint $u \nto{k'} v$ in $D_j$ that is not satisfied by the $k$~output rankings.

It remains to construct $n$ input rankings that generate the constraint graph $G(n' + 1, n, k', n - n' - 1)$. 
For each $j \in [n - 1]_0$ and $j' \in [2n' + 2]$, let $S_{j, j'}$ be the sequence of vertices that we obtain by traversing the cycle $D_j$ starting from vertex $(2n'+2)j + j'$ and following its edges. Taking indices modulo $n$, for each $j \in [n - 1]_0$, we define\footnote{This is slightly different from the definition that we used for the case $(n,k) = (4,2)$.}
\begin{align*}
\pi_j &= (E_{j}, E_{j + 1}, \dots, E_{j + n - n' - 2}, \\
&\qquad S_{j + n - n' - 1, 2}, S_{j + n - n', 4}, \dots, S_{j + n - 1, 2n' + 2}, \\
&\qquad O_{j}, O_{j + 1}, \dots, O_{j + n - n' - 2}),
\end{align*}
where the elements in each $E_{j''}$ and $O_{j''}$ are arbitrarily ordered.
We take $\pi_0,\pi_1,\dots,\pi_{n-1}$ as our $n$~input rankings. 
Since $n' + 2 \le n$, there is at least one $E_j$ and at least one $O_j$ in the ranking.

We first verify that for each $j \in [n - 1]_0$, the ranking $\pi_j$ satisfies all strong constraints.
\begin{itemize}
\item For all $j' \in [n - 1]_0$ and $a \in [n' + 1]$, the sequence $S_{j', 2a}$ satisfies all strong constraints in $D_{j'}$. 
For each $j' \in [n - 1]_0$, either $E_{j'}$ is ranked higher than $O_{j'}$ in $\pi_j$, or $\pi_j$ contains $S_{j', 2a}$ for some $a\in[n'+1]$, so all strong constraints in $D_{j'}$ are satisfied.
\item For all\footnote{Again, these indices are taken modulo~$n$.} $j' \in [j, j + n - n' - 2]$ and $d' \in [n - n' - 1]$, we have that $E_{j'}$ is ranked higher than either $S_{j' + d', 2a}$ (for some $a\in[n'+1]$) or $O_{j' + d'}$. 
Similarly, for all $j' \in [j + n - n' - 1, j + n - 1]$ and $d' \in [n - n' - 1]$, we have that $S_{j', 2a}$ (for some $a\in[n'+1]$) is ranked higher than either $S_{j' + d', 2a'}$ (for some $a' \in [n' + 1]$) or $O_{j' + d'}$. 
Therefore, the strong constraints across cycles, which are those from $E_{j}$ to $O_{j + d'}$ for $j \in [n - 1]_0$ and $d' \in [n - n' - 1]$, are also satisfied.
\end{itemize}

Finally, observe that for each $j \in [n - 1]_0$, the $n' + 1$ sequences $S_{j, 2}, S_{j, 4}, \dots, S_{j, 2n' + 2}$ collectively satisfy each constraint of the type $u \nto{k'} v$ in $D_j$ exactly $n'$ times, and they appear in $\pi_{j - (n - n' - 1)}, \dots, \pi_{j - (n - 1)}$, respectively.
Hence, all constraints $u \nto{k'} v$ in the graph are also satisfied. 
It follows that these $n$ input rankings generate all constraints in the graph, as desired.
\end{proof}

\section{Proofs of \Cref{thm:single-peaked-simul-prop,thm:pairwise-prop-single-crossing}}
\label{app:proof-restricted-preferences}

For ease of reference, we provide the definitions of the two restricted classes here, and refer to the survey by \citet{ElkindLaPe17} for further discussion.

\begin{definition}[Single-peaked]
\label{def:single-peaked}
A ranking~$\pi$ of the candidates in~$C$ is said to be \emph{single-peaked with respect to $(c_1,\dots,c_m)$}, where $(c_1,\dots,c_m)$ is an ordering of the candidates in~$C$, if there exists $\ell\in[m]$ such that $c_\ell$ is ranked top in~$\pi$, and whenever $1\le \ell'' < \ell' < \ell$ or $\ell < \ell' < \ell'' \le m$, the candidate $c_{\ell'}$ is ranked higher than $c_{\ell''}$ in~$\pi$.
The candidate~$c_\ell$ is called the \emph{peak} of~$\pi$ (or of a voter with the ranking~$\pi$).

A set of rankings is said to be \emph{single-peaked} if there exists an ordering $(c_1,\dots,c_m)$ of the candidates in~$C$ such that each ranking is single-peaked with respect to this ordering.
An instance is called \emph{single-peaked} if its set of input rankings is single-peaked.
\end{definition}

\begin{definition}[Single-crossing]
\label{def:single-crossing}
A set of rankings is said to be \emph{single-crossing with respect to $(i_1,\dots,i_n)$}, where $(i_1,\dots,i_n)$ is an ordering of the voters in~$N$, if for each pair of distinct candidates $c$ and~$c'$, the voters who prefer $c$ to~$c'$ form an interval in the ordering.
A set of rankings is said to be \emph{single-crossing} if it is single-crossing with respect to some ordering of the voters in~$N$.
An instance is called \emph{single-crossing} if its set of input rankings is single-crossing.
\end{definition}

Before starting the proof of \Cref{thm:single-peaked-simul-prop} for single-peaked preferences, we will set up some preliminaries. 
Without loss of generality, assume that the set of input rankings is single-peaked with respect to the ordering $(1, 2, \dots, m)$. 
We start with two simple but useful lemmas. 

\begin{lemma}
\label{lem:single-peaked-prefer-lemma}
For candidates $c < c' < c''$, an input ranking that ranks $c$ higher than $c'$ also ranks $c'$ higher than $c''$ (and therefore ranks $c$ higher than $c''$ as well). 
Similarly, an input ranking that ranks $c''$ higher than $c'$ also ranks $c'$ higher than $c$ (and therefore ranks $c''$ higher than $c$ as well).
\end{lemma}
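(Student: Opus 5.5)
The plan is to derive both claims directly from the definition of single-peakedness with respect to $(1,2,\dots,m)$ (\Cref{def:single-peaked}), by a case analysis on where the peak $p$ of the given input ranking $\pi$ lies. The key structural fact is this. On the left side of the peak, candidates closer to $p$ are ranked higher, so for $a<b\le p$ the candidate $b$ is ranked higher than $a$. Symmetrically, for $p\le a<b$ the candidate $a$ is ranked higher than $b$.

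For the first claim, fix $c<c'<c''$ and suppose $\pi$ ranks $c$ higher than $c'$. I first show that $c'$ must lie strictly to the right of the peak, i.e., $p<c'$.
\begin{itemize}
\item If $c'=p$, then $c'$ is ranked top, contradicting that $c$ is above $c'$.
\item If $c'<p$, then $c<c'<p$, so $c'$ is ranked higher than $c$ by the left-side monotonicity, again a contradiction.
\end{itemize}
Hence $p<c'<c''$, and right-side monotonicity gives that $c'$ is ranked higher than $c''$. Transitivity of the strict ranking $\pi$ then yields that $c$ is ranked higher than $c''$.

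The second claim is the mirror image. Suppose $\pi$ ranks $c''$ higher than $c'$. By the same argument with the roles of left and right exchanged, $c'$ cannot be the peak and cannot lie strictly right of it, so $c'<p$. Since then $c<c'<p$, left-side monotonicity gives that $c'$ is ranked higher than $c$, and transitivity again gives that $c''$ is ranked higher than $c$. Alternatively, one can obtain this claim by applying the first claim to the reversed axis $(m,\dots,1)$, with respect to which $\pi$ is also single-peaked.

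There is no real obstacle here. The only care needed is to handle the boundary case $c'=p$ separately, and to use the strict inequalities in \Cref{def:single-peaked} correctly: these inequalities only compare candidates on the same side of the peak, so they must be combined with the fact that the peak itself is ranked top.
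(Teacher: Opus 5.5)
Your proof is correct and follows essentially the same approach as the paper's: both use that $c$ being ranked above $c'$ forces the peak strictly to the left of $c'$, so monotonicity on the right of the peak places $c'$ above $c''$, with the second claim obtained symmetrically. Your version simply spells out the case analysis ($c'=p$ versus $c'<p$) that the paper leaves implicit.
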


\begin{proof}
Since $c$ is ranked higher than $c'$, the peak of the ranking must be some candidate $a < c'$. Then, $a < c' < c''$, and so $c'$ is ranked higher than $c''$. 
The other claim can be shown analogously.
\end{proof}

\begin{lemma}\label{lem:single-peaked-prefer-lemma-2}
For candidates $c < c' < c''$, an input ranking that ranks $c$ higher than $c''$ also ranks $c'$ higher than $c''$. Similarly, an input ranking that ranks $c''$ higher than $c$ also ranks $c'$ higher than $c$. 
\end{lemma}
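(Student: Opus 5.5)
The plan is a direct case analysis on where the peak of an input ranking lies relative to $c<c'<c''$, using only \Cref{def:single-peaked}. I will prove the first claim; the second follows by reversing the ordering $(1,\dots,m)$, since the input is also single-peaked with respect to $(m,\dots,1)$.

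Fix an input ranking $\pi$ that ranks $c$ higher than $c''$, and let $a$ be its peak. By \Cref{def:single-peaked}, candidates on the same side of the peak are ranked in decreasing order of their distance from $a$, with closer ones higher. I split into three cases according to the position of $a$.

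\emph{Case $a\le c'$.} Then $c'$ and $c''$ lie on the same side of the peak, or $c'=a$, and $c'$ is strictly closer to $a$. Hence $c'$ is ranked higher than $c''$. This holds whether or not $\pi$ prefers $c$ to $c''$.

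\emph{Case $a\ge c''$.} Then $c$ and $c''$ lie on the same side of the peak (or $c''=a$), with $c''$ closer to $a$. So $c''$ is ranked higher than $c$, which contradicts the hypothesis. This case therefore cannot occur.

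\emph{Case $c'<a<c''$.} Then $c<c'<a$. Since $c'$ is closer to the peak on the left side, $c'$ is ranked higher than $c$. Combining this with the hypothesis that $c$ is ranked higher than $c''$, transitivity gives that $c'$ is ranked higher than $c''$.

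In every possible case, $c'$ is ranked higher than $c''$, which proves the first claim. The second claim is symmetric under relabeling candidates via $x\mapsto m+1-x$. There is no real obstacle here. The only care needed is to treat the boundary cases where the peak equals $c'$ or $c''$; these fall under the "closer to the peak" comparison with the peak itself ranked top.
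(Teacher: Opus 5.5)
Your proof is correct and follows essentially the same route as the paper. Both arguments rule out a peak at or beyond $c''$, then split on whether the peak is at most $c'$ (where the conclusion is immediate) or strictly between $c'$ and $c''$ (where it follows from $c'$ being above $c$ plus transitivity), and both obtain the second claim by symmetry.
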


\begin{proof}
Given that $c$ is ranked higher than $c''$, the peak must be a candidate $a < c''$. If $c' < a < c''$, then $c'$ is ranked higher than $c$, and thus $c'$ is ranked higher than $c''$. If $a \leq c'$, then $c'$ is always ranked higher than $c''$. 
The other claim can be proven analogously.
\end{proof}

We will use the same notation for constraints as in the proof of the general case of \Cref{thm:pairwise-prop} (see Appendix~\ref{app:proof-pairwise-prop}). In particular, for candidates $c, c' \in C$, an edge $c \nto{t} c'$ for some $t \in [k]_0$ means that $t$ is the largest integer such that $c$ is ranked higher than $c'$ in at least $t \cdot n/k$ input rankings, and consequently $c$ must be ranked higher than $c'$ in at least $t$ output rankings.

For each $d \in [k]$, let $G_d$ be an unweighted directed graph with vertex set $C = [m]$, such that for each pair of candidates $c < c'$ we have an edge $c \to c'$ if and only if there is a constraint $c \nto{t} c'$ for some $t \in [d,k]$, and for each pair of candidates $c > c'$ we have an edge $c \to c'$ if and only if there is a constraint $c \nto{t} c'$ for some $t \in [k-d+1,k]$. 
For each $d \in [k]$, perform \Cref{alg:topo_sort_gd} on~$G_d$, which outputs a ranking $\pi_d$.

\begin{algorithm}[t]
\caption{Topological Sort of $G_d$}\label{alg:topo_sort_gd}
\begin{algorithmic}[1]
\State \textbf{Input:} Directed graph $G_d$ with vertex set $C = [m]$.
\State \textbf{Output:} Topological ordering of $G_d$ which is single-peaked with respect to the ordering $(1, 2, \dots, m)$. 

\State $a \gets 1, \, b \gets m$
\State $\pi_d \gets (~)$

\While{$a \le b$}
  \If{there does not exist a candidate $c \in [a + 1, b]$ such that $a$ has an outgoing edge to $c$}
    \State Prepend $a$ to the front of $\pi_d$.
    \State $a \gets a + 1$
  \Else
    \State Prepend $b$ to the front of $\pi_d$.
    \State $b \gets b - 1$
  \EndIf
\EndWhile

\State \Return $\pi_d$
\end{algorithmic}
\end{algorithm}

We prove the following proposition about $\pi_d$.
\begin{proposition}
\label{prop:topological-ordering}
For each $d \in [k]$, the ranking $\pi_d$ is a topological ordering of $G_d$ and is single-peaked with respect to the ordering $(1, 2, \dots, m)$.
\end{proposition}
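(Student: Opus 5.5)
Single-peakedness is immediate from the structure of \Cref{alg:topo_sort_gd}: at every moment the remaining candidates form an interval $[a,b]$, and we prepend an endpoint of that interval. Reading $\pi_d$ from top to bottom, it therefore ends at the last-placed candidate, which becomes the peak. Going down the ranking, the set of candidates ranked at or above any position is an interval containing the peak, and it grows by one endpoint at a time. Hence, on each side of the peak, candidates closer to the peak are ranked higher, which is exactly \Cref{def:single-peaked}. So the real content is showing that $\pi_d$ is a topological ordering of $G_d$.

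The key invariant I would establish is that whenever a candidate $x\in\{a,b\}$ is prepended, $x$ has no outgoing edge in $G_d$ to any still-remaining candidate in $[a,b]\setminus\{x\}$. Prepended candidates end up below everything prepended later. Consequently, if every edge $x\to y$ with both endpoints still present at the time $x$ is removed points from a later-removed candidate to an earlier one, then all edges point upward in $\pi_d$, which is what we need. It also suffices to show the invariant for edges into $[a,b]$: any $y$ already removed was prepended earlier, so $y$ sits below $x$ anyway and an edge $x\to y$ is fine. The invariant is trivial in the first branch by the if-condition, since $a$ has no out-edge to $[a+1,b]$.

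The main obstacle is the else-branch. There $a$ has an out-edge $a\to c$ for some $c\in[a+1,b]$, and we must show that $b$ has no out-edge to any $c'\in[a,b-1]$. I would first transfer the edge to $b$: since $a\to c$ with $a<c\le b$, \Cref{lem:single-peaked-prefer-lemma} (or its companion \Cref{lem:single-peaked-prefer-lemma-2} when $c=b$ is not needed) gives that every input ranking preferring $a$ to $c$ also prefers $a$, and $c$, to $b$. Thus the number of inputs ranking $a$ over $b$ is at least that ranking $a$ over $c$, so $a\to b$ carries weight at least $d$ in the left-to-right direction. Now suppose for contradiction that $b\to c'$ with $c'<b$, which requires weight at least $k-d+1$. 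By \Cref{lem:single-peaked-prefer-lemma-2}, every input preferring $b$ to $c'$, with $c'\ge a$, also prefers $b$ to $a$ (or equals that case when $c'=a$), so $b\to a$ has weight at least $k-d+1$.

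The contradiction then comes from counting. Let $p$ be the number of inputs ranking $a$ above $b$, so $n-p$ rank $b$ above $a$. Then $p\ge d\cdot n/k$ and $n-p\ge (k-d+1)\cdot n/k$, and summing gives $n\ge (k+1)n/k$, which is impossible. The delicate point is checking that the monotonicity from \Cref{lem:single-peaked-prefer-lemma,lem:single-peaked-prefer-lemma-2} goes in the needed direction for the endpoint cases $c=b$ and $c'=a$, where the claim is trivial, and for interior positions, and that edge weights are monotone in the counts because $t$ is defined as a floor of count$\cdot k/n$. I expect that bookkeeping to be the only real work.
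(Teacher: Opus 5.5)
Your proof is correct and follows the paper's argument. Your per-step invariant (the prepended endpoint has no out-edge into the remaining interval) is the paper's invariant (1), and your else-branch contradiction is the same: transfer both edges to the pair $\{a,b\}$, then count $dn/k + (k-d+1)n/k > n$. One small note: the implication $b \succ c' \Rightarrow b \succ a$ follows directly from \Cref{lem:single-peaked-prefer-lemma}, whereas obtaining it from \Cref{lem:single-peaked-prefer-lemma-2} requires taking that lemma's contrapositive.
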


\begin{proof}
Firstly, note that at the start of each iteration, $[a, b]$ represents the set of candidates that have not yet been added to $\pi_d$. Since we always add either $a$ or $b$ to the front of $\pi_d$ and recurse to either $[a + 1, b]$ or $[a, b - 1]$, respectively, the final ranking $\pi_d$ is guaranteed to be single-peaked with respect to the ordering $(1, 2, \dots, m)$.

It remains to show that $\pi_d$ is a topological ordering of $G_d$. Throughout \Cref{alg:topo_sort_gd}, we show that the following invariants always hold:
\begin{enumerate}[label={(\arabic*)}]
    \item No candidate already added to $\pi_d$ has an outgoing edge to any candidate in $[a, b]$.
    \item For candidates $c, c'$ such that $c$ is (not necessarily directly) in front of $c'$ in $\pi_d$, it holds that $c'$ does not have an outgoing edge to $c$.
\end{enumerate}

If these invariants are maintained throughout the algorithm, then invariant~(2) implies that $\pi_d$ is a topological ordering of $G_d$, as $\pi_d$ contains all candidates at the end of the algorithm. 
The invariants trivially hold at the start of the algorithm since $\pi_d$ is empty. 
Consider the two cases for the conditional branch in the algorithm.

\underline{Case 1}: $a$ has no outgoing edge to any candidate in $[a + 1, b]$. 

Invariant (1) implies that if we add $a$ to the front of $\pi_d$, invariant (2) is still preserved. 
Also, by the premise of this case, invariant (1) is still preserved after adding $a$.

\underline{Case 2}: $a$ has an outgoing edge to some candidate $c \in [a + 1, b]$. 

Since we now add $b$ to the front of $\pi_d$ instead of $a$, it suffices to show that $b$ has no outgoing edge to any candidate in $[a, b - 1]$, after which the argument follows as in Case 1.

Suppose for contradiction that $b$ has an outgoing edge to some candidate $c' \in [a, b - 1]$.
Since $c' < b$, by definition of $G_d$, this outgoing edge corresponds to a constraint $b \nto{t} c'$ where $k - d + 1 \leq t \leq k$. 
This means that at least $\frac{tn}{k} \geq \frac{(k - d + 1)n}{k}$ input rankings rank $b$ higher than $c'$. 
As $a \leq c' < b$, by \Cref{lem:single-peaked-prefer-lemma}, at least $\frac{(k - d + 1)n}{k}$ input rankings rank $b$ higher than $a$.
Similarly, by the premise of this case, we can deduce that at least $\frac{dn}{k}$ input rankings rank $a$ higher than $b$.

However, an input ranking cannot rank $a$ higher than $b$ and $b$ higher than $a$ simultaneously, so there are at least $\frac{(k - d + 1)n}{k} + \frac{dn}{k} = \frac{(k + 1)n}{k} > n$ input rankings, a contradiction.
It follows that both invariants are maintained at every step, and $\pi_d$ is indeed a topological ordering of $G_d$.
\end{proof}

We next present a lemma related to positional proportionality.

\begin{lemma}
\label{lem:single-peaked-simul-prop-induct}
Let $c \in C$ be a candidate and $d \in [k]_0$.
Let $\ell \in [m]$ and $t \in [k]$ be such that $c$ appears in the top $\ell$ ranks in at least $t \cdot n/k$ input rankings.
Let $x_d$ be the number of rankings among $\pi_1, \dots, \pi_d$ (which is an empty sequence if $d = 0$) where $c$ appears in the top $\ell$ ranks. Then, at least one of the following holds:
\end{lemma}

\begin{enumerate}[label={(\arabic*)}]
  \item $x_d \geq t$.
  \item There are at least $\frac{(d - x_d + t)n}{k}$ input rankings such that each of these rankings either ranks at least one candidate in\footnote{Since $c-\ell$ may be negative, in which case the set is empty, we avoid using the notation $[c-\ell]$ here.} $\{1,2,\dots,c-\ell\}$ higher than $c$ or ranks $c$ in the top $\ell$ ranks. 
  (It is possible that $c-\ell \le 0$, in which case the set is empty.)
\end{enumerate}

To prove this lemma, we first need to establish another lemma.
\begin{lemma}
\label{lem:single-peaked-helper}
Let $c \in C$ be a candidate and $\ell\in[m]$.
Consider an input ranking that ranks at least one candidate in $\{1,2,\dots,c-\ell\}$ higher than $c$ or ranks $c$ in the top $\ell$ ranks.
Then, this ranking also ranks $c$ higher than every candidate in $\{c  + \ell, c + \ell + 1, \dots, m\}$.
\end{lemma}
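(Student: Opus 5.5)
We split on the two alternatives in the hypothesis. Fix an input ranking $\pi$, single-peaked with respect to $(1,2,\dots,m)$. The goal is to show that $c$ is ranked higher than every $c''\ge c+\ell$ in $\pi$; it suffices to show that $c$ is ranked higher than $c+\ell$ when $c+\ell\le m$. Indeed, for any $c''>c+\ell$, \Cref{lem:single-peaked-prefer-lemma} applied to $c<c+\ell<c''$ turns "$c$ above $c+\ell$" into "$c+\ell$ above $c''$", and hence $c$ is above $c''$ by transitivity. (If $c+\ell>m$, the set is empty and there is nothing to prove.)

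\emph{Case 1: some $a\le c-\ell$ is ranked higher than $c$ in $\pi$.} Since $a<c$ and $a$ is above $c$, the peak of $\pi$ lies strictly to the left of $c$. Every candidate to the right of $c$ is then on the descending side of the peak and lies farther from it than $c$. Single-peakedness therefore places $c$ above $c+\ell$. Equivalently, we can invoke \Cref{lem:single-peaked-prefer-lemma} on $a<c<c+\ell$.

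\emph{Case 2: $c$ is in the top $\ell$ ranks of $\pi$.} Suppose for contradiction that $c+\ell$ is ranked higher than $c$. \Cref{lem:single-peaked-prefer-lemma-2} applied to $c<c'<c+\ell$ shows that every intermediate candidate $c'\in\{c+1,\dots,c+\ell-1\}$ is ranked higher than $c$. Together with $c+\ell$ itself, this gives $\ell$ candidates ranked above $c$. Then $c$ sits at rank at least $\ell+1$, contradicting the assumption that $c$ is in the top $\ell$ ranks.

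The argument has no real obstacle; the only step needing care is the counting in Case 2. There we must make sure the block $\{c+1,\dots,c+\ell\}$ has exactly $\ell$ members, all lying within $[m]$ (guaranteed since $c+\ell\le m$), and that they are all forced above $c$. This is exactly the content of \Cref{lem:single-peaked-prefer-lemma-2}.
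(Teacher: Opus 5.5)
Your proof is correct and follows the paper's two-case structure; Case~1 is the same as the paper's, which also applies \Cref{lem:single-peaked-prefer-lemma} to $a<c<c''$. In Case~2 the paper argues directly that the top $\ell$ candidates of a single-peaked ranking form an interval $[a,a+\ell-1]$ with $a\le c$, so all of $\{c+\ell,\dots,m\}$ lies outside the top $\ell$ and hence below $c$. Your counting argument via \Cref{lem:single-peaked-prefer-lemma-2} amounts to proving that interval fact at the single candidate $c+\ell$, and your reduction step (which the paper does not need) then extends it to all of $\{c+\ell,\dots,m\}$.
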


\begin{proof}
By the single-peaked property, the candidates in the top $\ell$ ranks in each ranking form a contiguous range $[a, a + \ell - 1]$ for some $a \in [m - \ell + 1]$. 
For rankings in which $c$ appears in the top $\ell$ ranks, we must have $a \leq c$, and so $a + \ell - 1 < c + \ell$. 
Hence, these rankings rank $c$ higher than all candidates in $\{c+\ell,\dots, m\}$, as all such candidates appear outside the top $\ell$ ranks.
On the other hand, by \Cref{lem:single-peaked-prefer-lemma}, rankings that rank at least one candidate in $\{1,\dots,c-\ell\}$ higher than $c$ also rank $c$ higher than all candidates in $\{c+\ell,\dots, m\}$.
\end{proof}

With \Cref{lem:single-peaked-helper} in hand, we can now prove \Cref{lem:single-peaked-simul-prop-induct}.
\begin{proof}[Proof of \Cref{lem:single-peaked-simul-prop-induct}]
We proceed by induction on $d$. For the base case $d = 0$, we have $x_0 = 0$, and so $d - x_d + t = t$. 
Since $c$ appears in the top $\ell$ ranks in at least $t \cdot n/k$ input rankings, the lemma holds for the base case. 
Assume that the inductive hypothesis is true up to some $d \in [k-1]_0$. 
If $x_d \geq t$, then we also have $x_{d + 1} \geq t$, which completes the inductive step for this case.

Suppose from now on that $x_d < t$, so property (2) must be true for~$d$ by the inductive hypothesis. 
By definition, we have $x_{d + 1} \in \{x_d, x_d + 1 \}$. 
If $x_{d + 1} = x_d + 1$, then $d + 1 - x_{d + 1} + t = d - x_d + t$, so property (2) is still true for~$d+1$. 
Otherwise, $x_{d + 1} = x_d$, and so $c$ is not in the top $\ell$ ranks of $\pi_{d + 1}$. 
Suppose that in the iteration of \Cref{alg:topo_sort_gd} when $c$ is added to $\pi_{d + 1}$, the unranked candidates form a range $[a, b]$. 
Since $c$ is added at this step, we have $c = a$ or $c = b$.

\underline{Case 1}: $c = a$.

As $c$ is not in the top $\ell$ ranks of $\pi_{d + 1}$, we have $b \geq c + \ell$. 
Since $x_d < t$, it holds that $d - x_d + t \geq d + 1$, so by \Cref{lem:single-peaked-helper}, for each $c'' \in \{c + \ell, \dots, m\}$, there exists $t' \in [d + 1, k]$ such that there is a constraint $c \nto{t'} c''$. 
By definition of $G_{d + 1}$, there is an edge $c \to c''$ for each $c'' \in \{c + \ell, \dots, m\}$. 
As $c$ cannot have an outgoing edge to any candidate in $[c + 1, b]$ when we add $c$ to the topological ordering, it holds that $b \leq c + \ell - 1$, which contradicts $b \ge c + \ell$.
Hence, this case cannot occur.

\underline{Case 2}: $c = b$.

As $c$ is not in the top $\ell$ ranks of $\pi_{d + 1}$, we have $a \leq c - \ell$. 
By our algorithm, we only choose $b$ as the next candidate to be added if $a$ has an outgoing edge to some candidate in $[a + 1, b]$. 
Thus, by definition of $G_{d + 1}$, there is a constraint $a \nto{t'} c'$ for some $c' \in [a + 1, b]$ and $t' \in [d+1, k]$. 
By \Cref{lem:single-peaked-prefer-lemma}, all input rankings that rank $a$ higher than $c'$ also rank $a$ higher than $b = c$, so at least $\frac{(d + 1)n}{k}$ input rankings rank $a$ higher than $c$.

Note that input rankings that rank $c$ in the top $\ell$ positions cannot rank $a$ higher than $c$. 
Indeed, the top $\ell$ positions must form a contiguous range of $\ell$ candidates containing $c$, and $a$ is outside this range since $a \le c - \ell$.
Thus, there are at least $\frac{(d + 1)n}{k} + \frac{tn}{k} = \frac{(d + 1 + t)n}{k}$ input rankings that rank $a$ higher than $c$ or rank $c$ in the top $\ell$ positions.
Since $d + 1 + t \geq d + 1 - x_{d + 1} + t$, property~(2) still holds for $d + 1$. 
This completes the induction.
\end{proof}

Next, we introduce a lemma related to PSC, which is analogous to \Cref{lem:single-peaked-simul-prop-induct} for positional proportionality.

\begin{lemma}
\label{lem:single-peaked-simul-prop-induct-2}
Let $C' = [a, b]$ for some $a, b \in C$ with $a \leq b$, and $d \in [k]_0$.
Let $t \in [k]$ be such that all candidates in $C'$ appear in the top $|C'|$ ranks in at least $t \cdot n/k$ input rankings.
Let $x_d$ be the number of rankings among $\pi_1, \dots, \pi_d$ (which is an empty sequence if $d = 0$) where all candidates in $C'$ appear in the top $|C'|$ ranks. Then, at least one of the following holds:
\end{lemma}

\begin{enumerate}[label={(\arabic*)}]
  \item $x_d \geq t$.
  \item There are at least $\frac{(d - x_d + t)n}{k}$ input rankings such that each of these rankings either ranks at least one candidate in $\{ 1, 2, \dots, a - 1 \}$ higher than $b$ or ranks all candidates in $C'$ in the top $|C'|$ ranks.
\end{enumerate}

To prove this lemma, we again need to establish another lemma.

\begin{lemma}
\label{lem:single-peaked-helper-2}
Let $C' = [a, b]$ for some $a, b \in C$ with $a \leq b$. 
Consider an input ranking that ranks at least one candidate in $\{ 1, 2, \dots, a - 1 \}$ higher than $b$ or ranks all candidates in $C'$ in the top $|C'|$ ranks.
Then, this ranking also ranks $a$ higher than all candidates in $\{b + 1, \dots, m\}$.
\end{lemma}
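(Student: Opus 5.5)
The proof mirrors that of \Cref{lem:single-peaked-helper}, with a case split on which of the two hypotheses the input ranking satisfies. Throughout we use that the input rankings are single-peaked with respect to $(1,2,\dots,m)$. Consequently, the set of the top $\ell$ candidates of any input ranking is a contiguous interval, and \Cref{lem:single-peaked-prefer-lemma,lem:single-peaked-prefer-lemma-2} are available.

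\emph{Case 1: the ranking places all of $C'=[a,b]$ in its top $|C'|=b-a+1$ ranks.} The top $b-a+1$ candidates form an interval of length $b-a+1$ containing $[a,b]$, so this interval is exactly $[a,b]$. Every candidate in $\{b+1,\dots,m\}$ therefore lies outside the top $|C'|$ ranks. Since $a$ lies inside them, $a$ is ranked higher than all of these candidates.

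\emph{Case 2: the ranking ranks some $c\in\{1,\dots,a-1\}$ higher than $b$.} We have $c<a\le b$. If $a<b$, apply \Cref{lem:single-peaked-prefer-lemma-2} to the triple $c<a<b$: since $c$ is ranked higher than $b$, the middle candidate $a$ is also ranked higher than $b$. If $a=b$, this holds trivially. Now take any $c''\ge b+1$. If $a<b$, apply \Cref{lem:single-peaked-prefer-lemma} to $a<b<c''$: since $a$ is ranked higher than $b$, also $b$ is ranked higher than $c''$, and hence $a$ is ranked higher than $c''$. If $a=b$, apply \Cref{lem:single-peaked-prefer-lemma} to $c<a<c''$: since $c$ is ranked higher than $a$, also $a$ is ranked higher than $c''$.

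The only subtle point is handling the degenerate case $a=b$ in Case 2 separately, since the triple $c<a<b$ is then not strict. Otherwise the argument reduces directly to the two earlier single-peakedness lemmas and the interval structure of top-$\ell$ sets.
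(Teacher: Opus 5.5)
Your proof is correct and follows essentially the same route as the paper: in Case 1 you use that $C'$ exactly fills the top $|C'|$ positions (cardinality alone suffices there, so the interval structure is not needed), and in Case 2 you combine \Cref{lem:single-peaked-prefer-lemma-2} on $c<a<b$ with \Cref{lem:single-peaked-prefer-lemma}. The only cosmetic difference is that the paper applies \Cref{lem:single-peaked-prefer-lemma} to $c<b<c''$, which already covers $a=b$, whereas you apply it to $a<b<c''$ and treat $a=b$ as a separate case.
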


\begin{proof}
If an input ranking ranks all candidates in $C'$ in the top $|C'|$ ranks, then all other candidates are outside the top $|C'|$ ranks, and so all candidates in $C'$ (including $a$) are ranked higher than all candidates in $\{b + 1, \dots, m\}$.

On the other hand, if an input ranking ranks some candidate $c \in \{ 1, 2, \dots, a - 1 \}$ higher than $b$, by Lemma \ref{lem:single-peaked-prefer-lemma}, it ranks $b$ higher than all candidates in $\{b + 1, \dots, m\}$. 
Since $c < b$ and $c$ is ranked higher than $b$, by \Cref{lem:single-peaked-prefer-lemma-2}, all candidates in $C'\setminus\{b\}$ are ranked higher than $b$.
It follows that $a\in C'$ is ranked higher than all candidates in $\{b + 1, \dots, m\}$.
\end{proof}

With Lemma \ref{lem:single-peaked-helper-2} in hand, we can now prove Lemma \ref{lem:single-peaked-simul-prop-induct-2}.

\begin{proof}
We proceed by induction on $d$. For the base case $d = 0$, we have $x_0 = 0$, and so $d - x_d + t = t$. 
Since all candidates $C'$ appear in the top $|C'|$ ranks in at least $t \cdot n/k$ input rankings, the lemma holds for the base case. 
Assume that the inductive hypothesis is true up to some $d \in [k-1]_0$. 
If $x_d \geq t$, then we also have $x_{d + 1} \geq t$, which completes the inductive step for this case.

Suppose from now on that $x_d < t$, so property (2) must be true for~$d$ by the inductive hypothesis. 
By definition, we have $x_{d + 1} \in \{x_d, x_d + 1 \}$. 
If $x_{d + 1} = x_d + 1$, then $d + 1 - x_{d + 1} + t = d - x_d + t$, so property (2) is still true for~$d+1$.  
Otherwise, $x_{d + 1} = x_d$, and so not all candidates in $C'$ appear in the top $|C'|$ ranks of $\pi_{d + 1}$. Observe that by \Cref{alg:topo_sort_gd}, the candidates in the top $|C'|$ ranks of $\pi_{d + 1}$ must form an interval $[a', b']$ such that $a' \neq a$ and $b' - a' = b - a$.

\underline{Case 1}: $a' > a$.

Since $b' - a' = b - a$, we have $b' > b$.
When $a$ is added to $\pi_{d + 1}$, the unranked candidates must form an interval $[a, b'']$ for some $b'' \in [b', m]$. 
This means that there is no outgoing edge from $a$ to any candidate in $[a + 1, b'']$.

Since $x_d < t$, it holds that $d - x_d + t \geq d + 1$, so by Lemma \ref{lem:single-peaked-helper-2}, for each $c \in \{ b + 1, \dots, m \}$, there exists $t' \in [d + 1, k]$ such that there is a constraint $a \nto{t'} c$. 
However, $b'' \in \{ b + 1, \dots, m \}$, so by definition of $G_{d + 1}$, there is an outgoing edge from $a$ to $b''$, a contradiction.
Hence, this case cannot occur.

\underline{Case 2}: $a' < a$.

Since $b' - a' = b - a$, we have $b' < b$.
When $b$ is added to $\pi_{d + 1}$, the unranked candidates must form an interval $[a'', b]$ for some $a'' \in [a']$. 
Since $b$ is added at this step, $a''$ must have an outgoing edge to some candidate in $[a'' + 1, b]$. By definition of $G_{d + 1}$, there is a constraint $a'' \nto{t'} c$ for some $c \in [a'' + 1, b]$ and $t' \in [d + 1, k]$, which implies that at least $\frac{(d + 1)n}{k}$ input rankings rank $a''$ higher than $c$. By Lemma \ref{lem:single-peaked-prefer-lemma}, at least $\frac{(d + 1)n}{k}$ input rankings rank $a''$ higher than $b$. 

Note that input rankings that rank $a''$ higher than $b$ cannot rank all candidates in $C'$ in the top $|C'|$ ranks, since $a'' \notin C'$ and $b \in C'$. Thus, there are at least $\frac{(d + 1)n}{k} + \frac{tn}{k} = \frac{(d + 1 + t)n}{k}$ input rankings that rank $a''$ higher than $b$ or rank all candidates in $C'$ in the top $|C'|$ ranks. Since $d + 1 + t \geq d + 1 - x_{d + 1} + t$, property~(2) still holds for $d + 1$. 
This completes the induction.
\end{proof}

Finally, we are ready to prove \Cref{thm:single-peaked-simul-prop}.
\begin{proof}[Proof of \Cref{thm:single-peaked-simul-prop}]
Let the $k$ output rankings be $\pi_1, \pi_2, \dots, \pi_k$. As shown in \Cref{prop:topological-ordering}, these rankings are single-peaked with respect to the ordering $(1, 2, \dots, m)$. 

First, we verify that this output satisfies pairwise proportionality. 
If we have a constraint $c \nto{t} c'$ for some $c < c'$ and $t \in [k]$, then the graphs $G_1, \dots, G_t$ contain the edge $c \to c'$, and thus $c$ is ranked higher than $c'$ in the $t$ rankings $\pi_1, \dots, \pi_t$ since they are topological orderings of $G_1, \dots, G_t$, respectively. 
Similarly, if we have a constraint $c \nto{t} c'$ for some $c > c'$ and $t \in [k]$, then the graphs $G_{k - t + 1}, \dots, G_k$ contain the edge $c \to c'$, and thus $c$ is ranked higher than $c'$ in the $t$ rankings $\pi_{k - t + 1}, \dots, \pi_k$.

Next, we verify that this output satisfies positional proportionality.
Suppose that $c$ appears in the top $\ell$ ranks in at least $t \cdot n/k$ input rankings, for some $\ell \in [m]$ and $t \in [k]$. By \Cref{lem:single-peaked-simul-prop-induct}, using the same definition of $x_k$, if $x_k < t$, then there are at least $\frac{(k - x_k + t)n}{k} \geq \frac{(k + 1)n}{k} > n$ input rankings, which is clearly false. 
Hence, $x_k \geq t$, which precisely means that $c$ appears in the top $\ell$ ranks in at least $t$ of $\pi_1, \dots, \pi_k$.

We can show that this output also satisfies PSC by using \Cref{lem:single-peaked-simul-prop-induct-2} analogously.
Indeed, observe that for each input ranking and any $\ell \in [m]$, the candidates in the top $\ell$ positions form an interval $C' = [a, b]$ for some $a, b \in C$ with $b = a + \ell - 1$, due to the single-peaked property.
Hence, \Cref{lem:single-peaked-simul-prop-induct-2} can be applied to any solid coalition of voters.

It remains to show that we can construct $\pi_1, \dots, \pi_k$ in polynomial time. 
For each pair of candidates $c$ and $c'$, let $t_{c, c'}$ be the maximum integer such that $c$ is ranked higher than $c'$ in at least $t_{c, c'}\cdot n/k$ input rankings. 
We can compute $t_{c, c'}$ for all $c,c'\in C$ in $O(nm^2)$ time, which allows us to construct the adjacency matrices of the graphs $G_1, \dots, G_k$ in $O(m^2k)$ time.
Each run of \Cref{alg:topo_sort_gd} takes $O(m^2)$ time, and since we run it for all $d \in [k]$, the runs take $O(m^2k)$ time in total. 
It follows that we can output $\pi_1, \dots, \pi_k$ in $O(m^2(n + k))$ time.
\end{proof}

Next, we prove \Cref{thm:pairwise-prop-single-crossing} for single-crossing preferences.
We need the following technical lemma.

\begin{lemma}
\label{lem:interval}
Let $a,b\in[n]$ be such that $b \ge a + \ceil{t \cdot n/k} - 1$ for some $t\in[k]$.
Then, the interval $[a,b]$ contains $\ceil{d \cdot n/k}$ for at least $t$ values of $d\in[k]$.
\end{lemma}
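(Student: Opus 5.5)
We need to show that an interval $[a,b]$ of length $b-a+1 \ge \ceil{t\cdot n/k}$ contains at least $t$ of the points $p_d := \ceil{d\cdot n/k}$, $d\in[k]$. The plan is a counting argument: identify the smallest index $d_0$ with $p_{d_0}\ge a$, then show that $p_{d_0+t-1}\le b$. Since the sequence $p_d$ is nondecreasing, every $p_d$ with $d\in[d_0,d_0+t-1]$ then lies in $[a,b]$, which gives the $t$ values. Since $n\ge k$, the step $n/k\ge 1$ makes the $p_d$ strictly increasing and hence distinct; strictly speaking we only need the $t$ indices $d$ to be distinct, which they are regardless.

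\textbf{Step 1 (locating $d_0$).} Let $d_0$ be the least $d\in[k]$ with $p_d\ge a$; it exists because $p_k=n\ge b\ge a$. By minimality, either $d_0=1$ or $p_{d_0-1}\le a-1$, that is, $\ceil{(d_0-1)n/k}\le a-1$. In the case $d_0=1$ we use $p_0=0\le a-1$ instead, so in both cases $(d_0-1)n/k\le a-1$.

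\textbf{Step 2 (bounding $p_{d_0+t-1}$).} Write
\[
\frac{(d_0+t-1)n}{k} = \frac{(d_0-1)n}{k} + \frac{tn}{k} \le (a-1) + \ceil{\frac{tn}{k}} \le b.
\]
Because $b$ is an integer, it follows that $p_{d_0+t-1}=\ceil{(d_0+t-1)n/k}\le b$. This step needs the rounding to be handled via integrality of $a-1$ and $b$ rather than by adding two ceilings, which could overshoot by one; this is why Step 1 was arranged to bound the unrounded quantity $(d_0-1)n/k$ by the integer $a-1$.

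\textbf{Step 3 (range check).} It remains to confirm that $d_0+t-1\le k$. Suppose instead that $d_0+t-1>k$. Then every $d\in[d_0,k]$ satisfies $a\le p_d\le p_k=n$, and in particular $p_d\le b$ whenever $b\ge n$. The main obstacle is when $b<n$ and the window runs past $d=k$. Here we compare $(d_0-1)n/k\le a-1$ with $b\ge a-1+\ceil{tn/k}$: if $d_0+t-1>k$, the inequality from Step 2 gives $kn/k\le (d_0+t-1)n/k\le b$, so $b\ge n$, contradicting $b\le n$ unless $b=n$. If $b=n$, the interval reaches $n=p_k$, and we only need $k-d_0+1\ge t$. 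I expect this boundary case to require the care noted above, and I would resolve it by applying Step 2 with $k$ in place of $d_0+t-1$ to confirm that all indices in $[d_0,k]$ are counted.

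Finally, if the available count is still short, I would mirror the argument from the top: take $d_1$ to be the largest $d$ with $p_d\le b$, and show $p_{d_1-t+1}\ge a$ using $\ceil{x}-\ceil{y}\ge \floor{x-y}$. Since $a,b\in[n]$ lie inside $[1,n]=[p_0+1,p_k]$, one of the two directions always supplies $t$ indices inside $[k]$.
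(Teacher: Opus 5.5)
Your strategy is sound, and Steps 1 and 2 are correct: take the first index $d_0$ with $p_{d_0}\ge a$, then show $p_{d_0+t-1}\le b$. The paper argues differently, by pure counting. The set $[r]$ contains $\lceil dn/k\rceil$ for exactly $\lfloor rk/n\rfloor$ values of $d\in[k]$. So $[a,b]$ contains $\lfloor bk/n\rfloor-\lfloor (a-1)k/n\rfloor\ge\lfloor (b-a+1)k/n\rfloor\ge t$ of them, using $\lfloor y+z\rfloor\ge\lfloor y\rfloor+\lfloor z\rfloor$. Your route is a legitimate alternative that also exhibits an explicit block of consecutive indices. As written, however, it has a hole in Step 3.

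The gap is the range check $d_0+t-1\le k$. When you assume $d_0+t-1>k$, you only use $d_0+t-1\ge k$, which yields $b\ge n$ and leaves the case $b=n$ open. Your treatment of that case does not work:
\begin{itemize}
\item You reduce it to needing $k-d_0+1\ge t$, which is exactly the inequality $d_0+t-1\le k$ you set out to prove.
\item ``Applying Step 2 with $k$ in place of $d_0+t-1$'' only reproves $p_k\le b$. It says nothing about how many indices lie in $[d_0,k]$.
\item The final fallback, a mirrored argument plus the claim that one of the two directions always supplies $t$ indices, is asserted rather than shown.
\end{itemize}
The missing observation is integrality of the index. $d_0+t-1>k$ means $d_0+t-1\ge k+1$. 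The inequality of Step 2 holds whether or not the index lies in $[k]$, so it gives $b\ge (d_0+t-1)n/k\ge (k+1)n/k>n$, contradicting $b\in[n]$. Thus the case never arises, and Steps 1 and 2 already finish the proof.

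Your mirrored argument would also work on its own if carried out. Let $d_1$ be the largest $d\in[k]$ with $p_d\le b$. Then $(d_1+1)n/k>b$ (trivially if $d_1=k$, since $b\le n$). Hence $(d_1-t+1)n/k>b-\lceil tn/k\rceil\ge a-1\ge 0$, which gives both $d_1-t+1\ge 1$ and $p_{d_1-t+1}\ge a$.
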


\begin{proof}
Observe that for each $r\in[n]$, the set $[r]$ contains $\ceil{d\cdot n/k}$ for exactly $\floor{rk/n}$ values of $d\in[k]$.
Therefore, the interval $[a,b]$ contains $\ceil{d \cdot n/k}$ for $\floor{bk/n} - \floor{(a-1)k/n} \ge \floor{(b-a+1)k/n} \ge t$ values of $d\in [k]$, where we use the fact that $\floor{y+z} \ge \floor{y} + \floor{z}$ for any real numbers $y,z$.
\end{proof}

We proceed to the proof of \Cref{thm:pairwise-prop-single-crossing}.

\begin{proof}[Proof of \Cref{thm:pairwise-prop-single-crossing}]
Without loss of generality, assume that the set of input rankings is single-crossing with respect to the ordering $(1, 2, \dots, n)$. For each $d \in [k]$, let $\sigma_d$ be the ranking $\pi_{\ceil{d\cdot n/k}}$ submitted by voter $\ceil{d \cdot n/k}$. 
We let $\sigma_1, \dots, \sigma_k$ be the $k$ output rankings.    

From the single-crossing definition, for any pair of candidates $c,c' \in C$, the set of voters $N' \subseteq N$ who prefer $c$ to $c'$ forms either a prefix or a suffix of $(1, 2, \dots, n)$. 
Similarly, for any subset of candidates $C'\subseteq C$, the set of voters who rank the candidates from~$C'$ in the top $|C'|$~ranks forms an interval of $(1, 2, \dots, n)$ (but not necessarily a prefix or suffix).
Indeed, if two voters rank the candidates from~$C'$ in the top $|C'|$~ranks, then so does every voter between them.

We will establish pairwise proportionality; the proof for PSC is analogous.
Suppose that for some $t\in [k]$, at least $t \cdot n/k$ voters prefer candidate~$c$ to candidate~$c'$.
These voters must form an interval $[a,b]$ of $(1, 2, \dots, n)$, where $b \ge a + \ceil{t \cdot n/k} - 1$.
By \Cref{lem:interval}, the rankings of at least $t$ of these voters are chosen in the output, which means that pairwise proportionality is satisfied.

Finally, we can trivially output the rankings in $O(mk)$ time by computing $\ceil{d \cdot n/k}$ for each $d \in [k]$ and outputting the corresponding input ranking.
\end{proof}

It is worth noting that the other positive results in this paper \emph{cannot} be achieved by selecting the output solely from the input rankings.
Moreover, for single-crossing instances, this output restriction also renders positional proportionality unattainable.
We refer to Appendices~\ref{app:output-from-input} and \ref{app:single-crossing-output-from-input} for detailed discussions.

\section{Choosing output from input rankings for single-crossing instances}
\label{app:single-crossing-output-from-input}

For single-crossing instances, one may hope that the approach used in \Cref{thm:pairwise-prop-single-crossing} to achieve pairwise proportionality and PSC could also be used to achieve positional proportionality.
We show not only that this approach fails, but also that it is fundamentally impossible to attain positional proportionality for single-crossing instances by choosing from the input.
Consider the instance below with $(n,m,k) = (4, 14, 2)$.
\begin{align*}
\pi_1 &= (1, 2, 3, 4, 5, 6, 7, 8, 9, 10, 11, 12, 13, 14); \\
\pi_2 &= (1, 3, 2, 4, 5, 7, 6, 8, 10, 9, 11, 12, 14, 13); \\
\pi_3 &= (3, 2, 1, 4, 5, 8, 7, 6, 10, 9, 11, 12, 14, 13); \\
\pi_4 &= (3, 4, 2, 1, 8, 7, 5, 6, 11, 10, 9, 14, 13, 12).
\end{align*}
Intuitively, this instance is composed of four non-crossing sets of candidates: $\{1,2,3,4\}$, $\{5,6,7,8\}$, $\{9,10,11\}$, $\{12,13,14\}$.
One can verify that the instance is indeed single-crossing by checking the pairwise relations within each set of candidates.
Now, for each pair of input rankings, we show that there exist $c\in C$ and $\ell\in[m]$ such that candidate $c$ appears in the top $\ell$ ranks \emph{only} within that pair.
\begin{itemize}
  \item $\pi_1,\pi_2$: candidate $1$ appears in the top $1$ rank.
  \item $\pi_3,\pi_4$: candidate $3$ appears in the top $1$ rank.
  \item $\pi_1,\pi_3$: candidate $2$ appears in the top $2$ ranks.
  \item $\pi_2,\pi_4$: candidate $7$ appears in the top $6$ ranks.
  \item $\pi_1,\pi_4$: candidate $13$ appears in the top $13$ ranks.
  \item $\pi_2,\pi_3$: candidate $10$ appears in the top $9$ ranks.
\end{itemize}
Each pair imposes a positional constraint on the output that requires selecting at least one of those two rankings. 
Because no output of size two can intersect all six pairs, positional proportionality cannot be achieved.

\end{document}